\DeclareMathOperator{\lcm}{lcm}
\newtheorem{theorem}{Theorem}
\newtheorem{remark}{Remark}
\newtheorem{proposition}{Proposition}
\newtheorem{corollary}{Corollary}
\newtheorem{definition}{Definition}
\newtheorem{construction}{Construction}
\newtheorem{conjecture}{Conjecture}
\newtheorem{example}{Example}
\newcommand{\M}{{\boldsymbol{M}}}
\newcommand{\m}{{\boldsymbol{m}}}
\begin{document}

% Ambient IoT 
\title{Multiset Combinatorial Gray Codes with Application to Proximity Sensor Networks}

\author{Chung Shue Chen,~\IEEEmembership{IEEE Senior Member}, Wing Shing Wong,~\IEEEmembership{IEEE Life Fellow}, Yuan-Hsun Lo,~\IEEEmembership{IEEE~Member}, Tsai-Lien Wong
\thanks{Part of this work was presented at the International Symposium on Information Theory and Its Applications (ISITA) in Taipei, Taiwan, 2024~\cite{ISITA24}.} 
\thanks{The research was partially funded by the National Science and Technology Council of Taiwan under Grants NSTC 114-2628-M-153-001-MY3 and NSTC 113-2115-M-110-003-MY2. 
% When accepted, we can add the following remark which would be even nicer (which can also highlight Prof. Wong) if you may agree: \textit{(C. S. Chen, Y.-H. Lo and W. S. Wong contributed equally to this work.)} 
\textit{(Corresponding author: Yuan-Hsun Lo)}
% \textit{(Corresponding authors: Yuan-Hsun Lo, ...)} % Let's add this on the final version for trying to add multiple, we need to test :-]
}
\thanks{C. S. Chen is with Nokia Bell Labs,  Paris-Saclay Center, 12 rue Jean Bart, Massy 91300, France. Email: chung\_shue.chen@nokia-bell-labs.com}
\thanks{W. S. Wong is with the Department of Information Engineering, The Chinese University of Hong Kong (CUHK), Shatin, Hong Kong SAR. Email: wswong@ie.cuhk.edu.hk}
\thanks{Y.-H. Lo is with the Department of Applied Mathematics, National Pingtung University, Taiwan.  Email: yhlo0830@gmail.com}
\thanks{T.-L. Wong is with Department of Applied Mathematics, National Sun Yat-sen University, Taiwan. Email: tlwong@math.nsysu.edu.tw}
} 

\maketitle

\begin{abstract}
We investigate coding schemes that map source symbols into multisets of an alphabet.  
Such a formulation of source coding is an alternative approach to the traditional
framework and is inspired by an object tracking problem over proximity sensor networks. 
We define a \textit{multiset combinatorial Gray code} as a mulitset code with fixed multiset cardinality that possesses
combinatorial Gray code characteristic. 
For source codes that are organized as a grid, namely an integer lattice, we propose a solution by first constructing a  mapping from the grid to the set of symbols, which we referred to as colors. 
The codes are then defined as the images of rectangular blocks in the grid of fixed dimensions. 
We refer to the mapping as a \textit{color mapping} and the code as a \textit{color multiset code}.  
We propose the idea of product multiset code that enables us to construct codes for high dimensional grids based on 1-dimensional (1D) grids.  
We provide a detailed analysis of color multiset codes on 1D grids, focusing on codes that require the minimal number of colors.
To illustrate the application of such a coding scheme, we consider an object tracking problem on 2D grids
and show its efficiency, which comes from exploiting transmission parallelism.
Some numerical results are presented to conclude the paper.
\end{abstract}

\begin{IEEEkeywords} 
%Multiset code, combinatorial Gray code, de Bruijin sequences, M-sequences, CRT sequences, object tracking
Multiset code, combinatorial Gray code, de Bruijin sequences, universal cycles, object tracking
\end{IEEEkeywords}

%%%%%%%%%%%%%%%%%%%%%%%%%%%%%%%%%%%%%%%%%%%%%%%%
\section{Introduction}
\label{section:introduction}
%%%%%%%%%%%%%%%%%%%%%%%%%%%%%%%%%%%%%%%%%%%%%%%%

By definition, a source code maps symbols in an information source to a set of finite-length strings of symbols from a $k$-ary alphabet.  
In this work, we study a special class of codes, whose images can also be interpreted as % multi-subsets 
multisets constructed from the alphabet elements. 
% of the alphabet set. 
Recall that a \textit{multiset} is a set of elements allowing multiplicity \cite{multiset89} and the order of the elements does not matter. 
So for a multiset code based on the alphabet set $\{A,B\}$ for example, the codewords $AAB$, $ABA$ and $BAA$ are all identical as they have the same number of $A$’s and $B$’s (i.e., these multisets $\{A,A,B\}$, $\{A,B,A\}$ and $\{B,A,A\}$ are identical). 

Since a multiset with $k$ distinct symbols can be represented as a $k$-tuple of integers, a multiset code can also be analyzed in terms of symbol strings.  
However, viewed in this traditional framework, the codewords may be imposed with intertwined constraints.  
For example, a multiset cardinality condition corresponds to a condition on the component sum the $k$-tuple. 
In this paper we will show that multiset can offer a more natural setting to address certain applications.

\textit{Combinatorial Gray codes} are generalization of the binary reflected Gray.  
They map successive source symbols to codewords that differ in ``some prespecified, small way''\cite{CGC97}.
%Consider {\it combinatorial Gray codes}, the type of codes we focus on in this paper.
%These codes, which are generalization of the {binary reflected Gray code}, map successive source symbols to code symbols that differ in ``some prespecified, small way''\cite{CGC97}.  
To illustrate with an example, suppose the source symbols are organized as a 1-dimensional (1D) cyclic integer
lattice, which we identify as  {${\mathbb{Z}}_M\triangleq\{0,1,...,M-1\}$}, the ring of residues modulo $M$ for any positive integer $M\geq 2$.
It is natural to consider $i$ and $i+1$ as successive symbols.
For each $m, 1 \leq m<M$, define an $\it m$-block  at $i$ to be $(i,i+1,...,i+m-1)$, where {$i\in {\mathbb{Z}}_M$}.  
Note that there is a one-to-one and onto correspondence between $i$ and the $m$-block it is tagged at. 

If we color each point in  {${\mathbb{Z}}_M$} by a color from a given set of colors $\{c_1,...,c_k\}$, we can obtain a multiset code by mapping each grid point to the color multiset of the {1D \it m}-block it is tagged at. 
In such a coding scheme, code symbols of successive blocks can differ by at most two elements, counting multiplicity. 
Hence, the code can be viewed as a combinatorial Gray code. 
We refer to a multiset code with fixed multiset cardinality that satisfies the combinatorial Gray code description as a \textit{multiset combinatorial Gray code} (MCGC). 
Note that Gray codes whose underlying elements are multisets have also been considered in~\cite{RS96}.

Obviously, the above 1D example to construct an MCGC can be extended to higher dimensional integer lattices, which we refer to as grids. Grids can be used to model proximity sensor networks \cite{BinarySensor95,BinarySensor05,TT11}. 
These wireless micro-sensors, also called binary proximity sensors, would report a target’s presence or absence in their vicinity during object tracking: each sensor outputs a 1 when the target % of interest 
is within its sensing range, and 0 otherwise. 
The above sensor network is also known as a \textit{binary sensor network} \cite{BSN03}. 
We get no other information about the location, direction, or other attributes of the target.
This simple model is of fundamental and also practical interest~\cite{Nokia22}, allowing for inexpensive sensing as well as minimal communication. 
For example, ambient Internet-of-Things (IoT) is a wireless sensor network connecting a large number of low-cost self-powered sensor nodes for detecting moving objects for instance under an automated factory or modern warehouse environment. 
Each sensor is equipped with a transmitter that can transmit at a limited data rate to forward its identification number (ID) in order to report the presence of the object. 
One would like to detect the occurrence of the object and track its location. 
This tracking feature is fundamental for industrial IoT as well as many other application scenarios such as smart cities, environment monitoring, logistics and supply chain~\cite{NBIoT17,OTSN20,wild2021,Patent25}.  

In this paper, we formalize an approach to construct MCGCs by first organizing the source symbols as an $n$-dimensional grid and generalize the idea stated for the 1D example.  
We refer to the code as {\it color multiset code}, or {\it color code} for short. 
The construction of color codes depends critically on how an $n$-dimensional grid is colored.  
To reduce the solution complexity, we introduce the idea of a \textit{product multiset code}, which allows high dimension solutions be synthesized from 1D solutions.

In addition to application to proximity sensor networks, constructing efficient MCGC for 1D grids has intriguing connection with {\it Eulerian circuits}, {\it universal cycles} and other important combinatorial concepts \cite{CDG92,HJZ09,GJKO20}. 
Most constructive solutions in the literature, such as universal cycles~\cite{HJZ09}, de Bruijin sequences \cite{Etzion84} and M-sequences \cite{Kumar92}, mainly focused on the scenario that the ordering of the colors of the $m$-block matters.
For more information on de Bruijin sequences, M-sequences and their 2D generalizations, please refer to~\cite{Paterson94,Bruckstein12,CETV25}.

For coding efficiency considerations, it is desirable to
construct MCGCs by using the minimal number of colors.
These codes are referred to as minimal codes.
In this paper, we provide an extensive study on minimal
1D MCGCs. 

To illustrate the application potential of MCGCs, we present a math model for the 2D tracking of an object that moves over a $2$-dimensional (2D) proximity sensor network. 
Indeed, the original inspiration of MCGCs came from this object tracking problem.

The technical results and main contributions of the paper are summarized below.
\begin{itemize} 
    \item We review the concept of combinatorial Gray code to define MCGCs and propose an algorithm for constructing multiset codes by means of defining a color mapping on a high-dimensional grid.
    \item We propose the concept of a product multiset code that allows the construction of high-dimensional MCGCs from 1D MCGCs.
    \item We provide an extensive study on 1D color multiset codes that require the minimum number of colors. 
    The answer depends critically on the cardinality of the color image set.  Explicit solutions are derived for some small value cases. We propose a synthetic construction for general cases, together with an asymptotic analysis on the minimum number of colors.
    \item We show how MCGCs can be applied to object tracking over a 2D proximity sensor network.  
    The MCGCs provide a simple way to arrange for parallelism in data transmission, which results in channel efficiency gains over a single channel transmission.
    \item We study the decoding problem and propose a highly efficient decoding method for a special class of 1D MCGCs. For general high-dimensional MCGCs, if their structure arises from product codes, we also develop a feasible decoding strategy.
    \item We analyze the resource efficiency achieved by using MCGCs in encoding, where the resource is measured by the number of symbols used. An asymptotic result is established to quantify the benefit.
    %\item We derive the general upper and lower bounds on the maximum network size  for various system  configuration parameters (e.g., the chosen sensor detection range and the total number of distinct IDs). Moreover, we provide the construction methods for the required MCGC and also show their numerical results. 
    % We show how to apply our multiset color coding protocol in the object tracking problem. 
\end{itemize}

\begin{comment}  
Note that we do not consider the  sensor nodes can measure or estimate the direction % of 
to the object or the angle of arrival of an % (reflected)  
incoming signal. 
See for example in \cite{BinarySensor95,BinarySensor05,TT11}, % BSN03  
these wireless micro-sensors, also called 
% known as 
binary proximity sensors, %  \cite{BinarySensor95,BinarySensor05,TT11}, % would 
would % can 
only report a target’s presence or absence in their vicinity during object tracking. 
Besides, we also consider that these % micro- 
thin % sensor nodes 
devices would not % estimate 
determine the distance from the object, which may be estimated for example by using an empirical signal path loss model % and precise % careful 
with accurate measurement of the  
received signal strength % measurement \cite{VTC03} or % in 
or by using % by estimating 
the signal propagation % arrival or 
time to/from the object.  
The above % system 
sensor network is % called 
also known as a 
\textit{binary sensor network} \cite{BSN03}. 
Discussions of sensing and detection with more capable sensors % nodes 
% system 
or % advanced 
different techniques of 
object 
% target 
tracking % through 
% when 
by using multiple modalities and additional information can be found in \cite{SPmag02,BinarySensor06,OTSN11}. 
% On the other hand
Besides, % techniques and
regarding algorithms and methods for extending   % or generalization
% 
from single target tracking % methods 
to multiple target tracking schemes in % wireless 
sensor networks, one can trace back to % different  methods % and principles 
\cite{Algorithms92,MultipleTarget14,Sensors17} and also 
the references therein. 
% MOJ21
% MTT11
\end{comment}

The rest of the paper is organized as follows.
In Section~\ref{section:2D-Map-Coloring-Problem}, we formalize the concept of color multiset codes based on
high-dimensional grids and introduce the concept of product multiset code, which reduces the construction of higher-dimensional case to 1D case.
In Section~\ref{section:main-results}, we provide a detailed study of minimal 1D codes as well as
other codes based on code synthesis that can be shown to be asymptotically minimal as the grid size tends to infinity.
In Section~\ref{section:Application}, we discuss an application of MCGCs to object tracking in a 2D
proximity sensor network.
The decoding issue is addressed in Section~\ref{section:decoding}.
We show the performance gain of the proposed protocol against the conventional protocol by deriving the reduction factor in Section~\ref{section:discussion}. 
Finally, a conclusion is presented in Section~\ref{section:conclusion}.

%%%%%%%%%%%%%%%%%%%%%%
%%%%%%%%%%%%%%%%%%%%%%%%%%
\section{Color Mapping Problem}
\label{section:2D-Map-Coloring-Problem}
%%%%%%%%%%%%%%%%%%%%%%%%%%%%%%%%%%%%%%%%%%%%%%%%

\subsection{Mathematical Definition}
\label{section:math-definition}

Let $\mathbb{Z}^+$ denote the set of all positive integers.
For $n\in\mathbb{Z}^+$, let $Z_n$ be the set $\{0,1,\ldots,n-1\}$. 
Note that $\mathbb{Z}_n$ is different from $Z_n$, as the former one refers to the ring of residues modulo $n$ while the latter one just collects all its elements.

Let $n\in\mathbb{Z}^+$.
For an $n$-tuple $\boldsymbol{M}=(M_1,\ldots,M_n)$, where $M_i\in\mathbb{Z}^+$ for all $i$, define an $n$-dimensional integer lattice by $\mathcal{G}_{\M} \triangleq Z_{M_1}\times\cdots\times Z_{M_n}=\{(x_1,\ldots,x_n):\,x_i\in Z_{M_i},\forall i\}$.
For simplicity, we refer an integer lattice as a \textit{grid} in this paper.
Throughout this paper, boldface is used to denote vectors or $n$-tuples.

Given an $n$-tuple $\m=(m_1,\ldots,m_n)$ with $m_i\leq M_i$ for each $i$, define an \textit{$\m$-block} of $\mathcal{G}_{\M}$ as follows.
For $0\leq x_i<M_i-m_i$, $1\leq i\leq n$, the $\m$-block at $(x_1,\ldots,x_n)$ of $\mathcal{G}_{\M}$ is defined to be the set of grid points:
\begin{align*}
\{(x_1+t_1,\ldots,x_n+t_n):\,0\leq t_i<m_i, \forall i\}.
\end{align*}
The \textit{$\m$-coding area} of $\mathcal{G}_{\M}$ is the subset consisting of grid points: $\{(x_1,\ldots,x_n):\,0\leq x_i<M_i-m_i,\forall i\}$.

An \textit{$n$-dimensional color mapping}, $\Phi$, maps $\mathcal{G}_{\M}$ to a set of $k$ colors for some $k\in\mathbb{Z}^+$.
For convenience, we use $[k]\triangleq\{1,2,\ldots,k\}$ to indicate the set of colors. 
Denote by $\mathcal{C}_{\M;k}$ the collection of all $n$-dimensional color mappings on $\mathcal{G}_{\M}$ with $k$ colors. 

Let $\mathcal{P}(e,k)$ represent the collection of multi-subsets of $[k]$ with exactly $e$ elements.

\begin{definition}
Given a color mapping $\Phi\in\mathcal{C}_{\M;k}$ and an $\m$-coding area, an \emph{$n$-dimensional color multiset code} defined by $\Phi$ is a mapping from the coding area to $\mathcal{P}(\prod_{i=1}^n m_i,k)$ so that a point $(x_1,\ldots,x_n)$ is represented by the multiset 
\begin{align*}
    S_{\m}(x_1,\ldots,x_n)\triangleq\{\Phi(x_1+t_1,\ldots,x_n+t_n):\,0\leq t_i<m_i, \forall i\}.
\end{align*}
We refer to $S_{\m}(x_1,\ldots,x_n)$ as the \emph{color multiset tagged at} $(x_1,\ldots,x_n)$.
%Given a color mapping {\color{blue}$\Phi\in\mathcal{C}_{\M;k}$ and an $\m$-coding area, a \emph{$n$-dimensional color multiset code}} defined by $\Phi$ is a mapping from the coding area to {\color{blue}$\mathcal{P}(\prod_{i=1}^n m_i,k)$} so that a point $(x_0,y_0)$ is represented by the multiset 
%\begin{align*}
%    S_{m,n}(x_0,y_0)\triangleq\{\Phi(x_0+i,y_0+j):\,0\leq i<m, 0\leq j<n\}.
%\end{align*}
\end{definition}
%We refer to $S_{m,n}(x_0,y_0)$ as the \textit{color multiset tagged at} $(x_0,y_0)$.

An $n$-dimensional color multiset code defined by $\Phi\in\mathcal{C}_{\M;k}$ is called $\m$-\textit{distinguishable} if the multisets $S_{\m}(x_1,\ldots,x_n)$ are all distinct for all grid points $(x_1,\ldots,x_n)$ in the $\m$-coding area.
%A 2D multiset code defined by $\Phi\in\mathcal{C}_{M,N;k}$ is called $(m,n)$-\textit{distinguishable} if the multisets $S_{m,n}(x_0,y_0)$ are all distinct for all grid points $(x_0,y_0)$ in the $(m,n)$-coding area.
In other words, if we associate a grid point in the coding area by the $\m$-block it is tagged at and identify it by the multiset of colors the block points mapped to, then the grid points are uniquely identified.

A color mapping problem aims to find a distinguishable color multiset code for a given grid $\mathcal{G}_\M$ and a block size $\m$. 
Note that two distinct blocks tagged at $\boldsymbol{x}=(x_1,\ldots,x_n)$ and $\boldsymbol{x'}=(x'_1,\ldots,x'_n)$ can be considered as neighbors if $|\boldsymbol{x}-\boldsymbol{x'}|\triangleq \sum_{i=1}^n|x_i-x'_i| =1$, namely, they are differ by exactly one coordinate with difference $1$.
Under a color multiset code, the two color multisets tagged at two neighboring points, each of which contains $\prod_{i=1}^{n}m_i$ elements, can differ by at most  
\begin{equation}\label{eq:differ-elements}
2\Big(\prod_{i=1}^{n}m_i - (m_j-1)\prod_{i\neq j}m_i\Big) = 2\prod_{i\neq j}m_i
\end{equation}
elements, counting multiplicity, where $j$ indicates the unique distinct coordinate index.
Note also that the equation in Eqn.~\eqref{eq:differ-elements} is defined to be $2$ in the case when $n=1$, namely, the 1D case.
Hence, one can view such a code as a combinatorial Gray code.

When it comes to 1D and 2D cases, the two basic cases $n=1,2$, we will avoid redundant parentheses in notation if it does not cause any ambiguity.
That is, we simply use $\mathcal{G}_M$, $m$-block, $\mathcal{C}_{M;k}$, $S_m(x)$ for 1D case, and $\mathcal{G}_{M_1,M_2}$, $(m_1,m_2)$-block, $\mathcal{C}_{M_1,M_2;k}$, $S_{m_1,m_2}(x,y)$ for 2D case.

\begin{example}\label{ex:1D-2D} \rm
Fig.~\ref{fig:1D} shows a color mapping on a 1D grid of size $10$ using $4$ colors.
We use black to denote the indices of the grid points, and red to denote the color labels.
When $m=3$, the color multiset code defined by it has: $S_3(0)=\{1,1,2\},S_3(1)=\{1,2,3\},\ldots,S_3(7)=\{2,4,4\}$.
So it defines a $3$-distinguishable color multiset code. 
\begin{figure}[h]
\centering
\includegraphics[width=0.6\columnwidth]{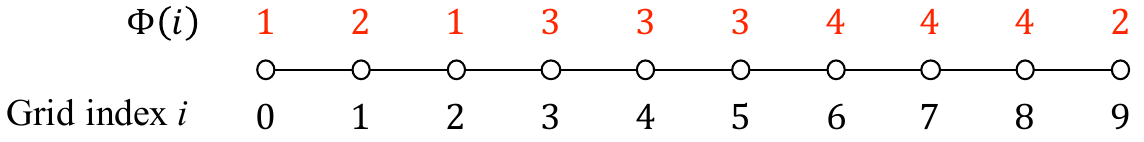}
    \caption{A 1D color mapping on $\mathcal{G}_{10}$ with $4$ colors.}  \label{fig:1D} 
\end{figure} 

Fig.~\ref{fig:2D} shows a color mapping on a 2D grid of size $7\times 5$ using $4$ colors.
The color label is displayed at the upper-right corner of the corresponding grid point.
When $m_1=m_2=2$ (i.e., block size $2\times 2$), we have $S_{2,2}(0,0)=\{1,1,1,1\}, S_{2,2}(0,1)=\{1,1,2,3\}$, and so on.
One can easily verify that this color mapping defines a $(2,2)$-distinguishable multiset color code.
\begin{figure}[h]
\centering
\includegraphics[width=0.35\columnwidth]{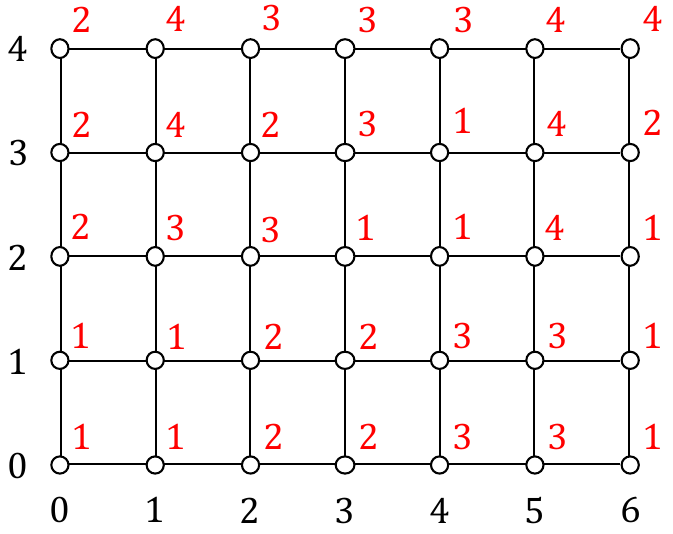}
    \caption{A 2D color mapping on $\mathcal{G}_{5,7}$ with $4$ colors.}  \label{fig:2D} 
\end{figure} 
\end{example}

In some applications, one may wish to identify the grid point $(x_1,\ldots,M_j,\ldots,x_n)$ with $(x_1,\ldots,0,\ldots,x_n)$, $1\leq j\leq n$, so that the coding area assumes the geometric characteristics of an $n$-dimensional torus.
(For example, one may allow the color pattern be repeated after a certain distance similar to radio frequency reuse in a wireless communication system.)  
In this case, we view the grid as a {\it cyclic grid} and denote it by $\mathcal{G}^c_{\M}$.
The \textit{cyclic} version of the color multiset $S_{\m}(\boldsymbol{x})$ can be defined for any point $\boldsymbol{x}=(x_1,\ldots,x_n)\in\mathcal{G}^c_{\M}$ accordingly using corresponding modular arithmetic.
In this sense, the cyclic grid $\mathcal{G}^c_{\M}$ is isomorphic to $\mathbb{Z}_{M_1}\times\cdots\times\mathbb{Z}_{M_n}$ and the coding area is equal to the whole grid.  
Moreover, if $m_i<M_i$ for all $i$, then each $\m$-block is tagged at a unique point.
Let $\mathcal{C}^c_{\M;k}$ denote the collection of all $n$-dimensional color mappings on $\mathcal{G}^c_{\M}$ with $k$ colors. 
An $n$-dimensional color multiset code defined by $\Phi\in\mathcal{C}^c_{\M;k}$ is cyclic $\m$-distinguishable if the multisets $S_{\m}(\boldsymbol{x})$ are all distinct for all $\boldsymbol{x} \in\mathcal{G}^c_{\M}$.
The 1D color mapping in Example~\ref{ex:1D-2D} defines a cyclic $3$-distinguishable color multiset code, while the 2D case is not a cyclic $(2,2)$-distinguishable since $S_{2,2}(5,0)=\{1,1,3,3\}=S_{2,2,}(6,0)$.

For efficiency considerations, it is natural to seek color multiset codes that require the minimal number of colors.   
We denote by $K_{\M}(\m)$ the minimum number of colors required for the existence of an $\m$-distinguishable color multiset code on $\mathcal{G}_{\M}$, and $K^c_{\M}(\m)$ for the cyclic case. 
The 1D and 2D cases are denoted simply by $K_M(m), K^c_M(m)$ and $K_{M_1,M_2}(m_1,m_2), K^c_{M_1,M_2}(m_1,m_2)$, respectively.

%%%%%%%%%%%%%%%%%%%%%%%%%%%%%%%%%%%%%%%%%%%%%%%%
\subsection{Product Multiset Code}
\label{section:product-code}
%%%%%%%%%%%%%%%%%%%%%%%%%%%%%%%%%%%%%%%%%%%%%%%%
% \cite{Multiset2024}

One can construct $n$-dimensional color mappings based on 1D solutions. 
There are multiple algorithms that can achieve this.
Here we provide a simple approach based on the idea of {\it product multiset code}.
Note that, in~\cite{ISITA24}, the construction of 1D color multiset codes was investigated and solutions
were found for certain parameters only.
We will provide further results on this in Section~\ref{section:main-results}.

Consider positive integers $M_1,\ldots,M_n$.
For $1\leq i\leq n$, let $\Phi_i\in\mathcal{C}_{M_i;k_i}$.
Consider a new set of colors with $\prod_{i=1}^n k_i$ elements and index its elements as $n$-tuples in $[k_1]\times\cdots\times[k_n]$.
We define an $n$-dimensional color mapping $\Phi$ on $\mathcal{G}_{\M}$ with $\prod_{i=1}^n k_i$ colors by setting
\begin{equation}\label{eq:product-code-def}
\Phi(x_1,\ldots,x_n)=(\Phi_1(x_1),\ldots,\Phi_n(x_n))
\end{equation}
for $(x_1,\ldots,x_n) \in\mathcal{G}_{\M}$.

\begin{proposition}\label{prop:product-code}
The product multiset code defined by $\Phi$ given in Eqn.~\eqref{eq:product-code-def} is $\m$-distinguishable if and only if the code defined by $\Phi_i\in\mathcal{C}_{M_i;k_i}$ is $m_i$-distinguishable for all $i$.
\end{proposition}
\begin{proof}
The ``only if'' part is obvious by definition, so it suffices to consider the ``if'' part.

Based on the $n$-tuple representation, for any multi-subset $B\subseteq[k_1]\times\cdots\times[k_n]$, we define the first coordinate projection of $B$, $\mathsf{P}_1(B)$, by:
\begin{align*}
\mathsf{P}_1(B)\triangleq\{i_1:(i_1,i_2,\ldots,i_n)\in B \}.
\end{align*}
Note that the set is defined as a multiset so that $\mathsf{P}_1(B)$ always containing the same number of elements as $B$, i.e., $|\mathsf{P}_1(B)|=|B|$.

Now suppose the color multiset of a block tagged at $\boldsymbol{x}=(x_1,\ldots,x_n)$, $S_{\m}(\boldsymbol{x})$, is equal to that of another block tagged at $\boldsymbol{x'}=(x'_1,\ldots,x'_n)$, $S_{\m}(\boldsymbol{x'})$. 
We shall show that $x_i=x'_i$ for all $i$.

We identify the projected elements, $\mathsf{P}_1(S_{\m}(\boldsymbol{x}))$, by first focusing on grid points of the form $(x,\hat{x}_2,\ldots,\hat{x}_n)$ for fixed coordinates $\hat{x}_2,\ldots,\hat{x}_n$, where $x_j\leq \hat{x}_j<x_j+m_j$ for $j=2,\ldots,n$. 
Let
\begin{align*}
\mathcal{D}_{(\hat{x}_2,\ldots,\hat{x}_n)}\triangleq\{c:\,(c,\Phi_2(\hat{x}_2),\ldots,\Phi_n(\hat{x}_n))=\Phi(x_1+t,\hat{x}_2,\ldots,\hat{x}_n), 0\leq t < m_1\}.
\end{align*}
By Eqn.~\eqref{eq:product-code-def}, the multiset $\mathcal{D}_{(\hat{x}_2,\ldots,\hat{x}_n)}$ is independent of $\hat{x}_2,\ldots,\hat{x}_n$ and we simply represent it as $\mathcal{D}$.
Hence, $\mathsf{P}_1(S_{\m}(\boldsymbol{x}))$, which contains $\prod_{i=1}^n m_i$ elements, is equal to the union of $\prod_{i=2}^n m_i$ copies of $\mathcal{D}$. 
Note also, $\mathcal{D}$ is the multiset that codes the $m_1$-block in $\mathcal{G}_{M_1}$ tagged at $x_1$ under $\Phi_1$.

Similarly, for the grid point $\boldsymbol{x'}$, define for $\hat{x}'_2,\ldots,\hat{x}'_n$, $x'_j\leq \hat{x}'_j<x'_j+m_j$ for $j=2,\ldots,n$, 
\begin{align*}
\mathcal{E}\triangleq\{c:\,(c,\Phi_2(\hat{x}'_2),\ldots,\Phi_n(\hat{x}'_n))=\Phi(x'_1+t,\hat{x}'_2,\ldots,\hat{x}'_n), 0\leq t<m_1\}.
\end{align*}
The set is independent of $\hat{x}'_2,\ldots,\hat{x}'_n$ and is well-defined.
It is equal to the color multiset code of the $m_1$-block in $\mathcal{G}_{M_1}$ tagged at $x'_1$ under $\Phi_1$.
Moreover, $\mathsf{P}_1(S_{\m}(\boldsymbol{x'}))$ is equal to the union of $\prod_{i=2}^{n}m_i$ copies of $\mathcal{E}$.

By assumption, the two projected multisets are equal, so it follows that $\mathcal{D}=\mathcal{E}$.
Given that $\Phi_1$ is $m_1$-distinguishable, we have $x_1=x'_1$. 

Following the same argument, one can show that $x_j=x'_j$ for $j=2,\ldots,n$.
Hence the proof is completed.
\end{proof}

%%%%%%%%%%%%%%%%%%%%%%

The following result is a direct consequence of Proposition~\ref{prop:product-code}. 

\begin{proposition}\label{prop:minimal}
Consider two $n$-tuples $\M=(M_1,\ldots,M_n)$ and $\m=(m_1,\ldots,m_n)$ for some $M_1,\ldots,M_n,m_1,\ldots,,m_n\in\mathbb{Z}^+$ with $m_i\leq M_i$ for all $i$. 
One has
\begin{equation}    %K_{\mathbb{Z}_M\times\mathbb{Z}_N}(m,n) \leq K_{\mathbb{Z}_M}(m)K_{\mathbb{Z}_N}(n). 
K_{\M}(\m)\leq \prod_{i=1}^n K_{M_i}(m_i).
\end{equation}
\end{proposition}

We will focus on finding solutions to the 1D color mapping problem in the next section.

%%%%%%%%%%%%%%%%%%%%%%%%%%%%%%%%%%%%%%%%%%%%%%%%
\section{1D Code Construction and Sequence Length Bounds} \label{section:main-results}
%%%%%%%%%%%%%%%%%%%%%%%%%%%%%%%%%%%%%%%%%%%%%%%%

The 1D color mapping problem is to seek for the minimal number of colors, $k=K_M(m)$, such that an $m$-distinguishable color code on $\mathcal{G}_M$ exists, for given $m$ and $M$.
In this section, we consider an equivalent problem: For given $k$ and $m$, maximize the grid size $M$ such that an $m$-distinguishable color code on $\mathcal{G}_M$ with $k$ colors exists.
Precisely speaking, let $M_m(k)$ denote the maximum value $M$ such that an $m$-distinguishable color code on $\mathcal{G}_M$ with $k$ colors exists.
We have
\begin{align*}
K_M(m) &= \min\{k:\,M_m(k)\geq M\}, \text{ and} \\
M_m(k) &= \max\{M:\,K_M(m)\leq k\}.
\end{align*}
In other words, $K_M(m)$ can be determined if the values $M_m(k)$ are known, and vice versa.

After introducing some necessary preliminaries and a general upper bound on the maximum value $M$ in Section~\ref{section:math-modeling}, we obtain a general lower bounds by the help of some combinatorial structures in the literature in Section~\ref{section:lower-bounds}.
Section~\ref{section:construction-method} is devoted to derive the explicit values for some small $m$.
Finally, in Section~\ref{section:recursive-construction}, we will propose a synthetic construction for larger $m$, together with an asymptotic analysis on the maximum value $M$.

\subsection{Mathematical Model for 1D Coloring}
\label{section:math-modeling}

Let $M,m,k\in\mathbb{Z}^+$ with $m\leq M$.
A 1D color mapping $\Phi\in\mathcal{C}_{M;k}$ can be realized as a sequence $S=s_0s_1\cdots s_{M-1}$ by letting $s_i=\Phi(i)$ for $i\in Z_M$.
In such a fashion, the color multiset tagged at point $t$ is represented as $S_m(t)\triangleq\{s_t,s_{t+1},\ldots,s_{t+m-1}\}$, and we say the sequence $S$ is \textit{$m$-distinguishable} if all $S_m(t)$ are distinct for all $0\leq t\leq M-m$.
Similarly, $S=s_0s_1\cdots s_{M-1}$ can also be used to represent a 1D color mapping on a cyclic grid $\mathcal{G}^c_M(=\mathbb{Z}_M)$, and is called \textit{cyclic $m$-distinguishable} if all $S_m(t)$, $t\in\mathbb{Z}_M$, are distinct.
For fixed $m$ and $k$, we use $M^c_m(k)$ to denote the maximum length of a cyclic $m$-distinguishable sequence whose elements (colors) are in $[k]$.

\begin{example}\rm
Let $M=15$ and $k=5$, and consider the sequence $S=1,2,2,3,4,4,5,1,1,3,3,5,5,\allowbreak 2,4$.
When $m=2$, it is easy to see that $S$ is both $2$-distinguishable and cyclic $2$-distinguishable; when $m=3$, $S$ is $3$-distinguishable but not cyclic $3$-distinguishable since both $S_{3}(13),S_{3}(14)$ consist of the three integers $1,2,4$.
\end{example}

Let $S=s_0,s_1,\ldots,s_{M-1}$ be a cyclic $m$-distinguishable sequence.
For $t\in\mathbb{Z}_M$, define the $t$\textit{-cut} of $S$ as the string
\begin{equation}\label{eq:sequence-cut}
s_t,s_{t+1},\ldots, s_{M-1},s_0,s_1,\ldots, s_{t-1}, s_{t}, s_{t+1},\ldots, s_{t+m-2}.
\end{equation}
That is, ``cutting'' the original cyclic sequence $S$ at the position $s_{t}$ and repeating the consequent $m-1$ elements.
It is easy to see that each of the color multisets $S_m(t)$, $t\in\mathbb{Z}_M$, appears exactly once as a multiset of some consecutive $m$ elements of the string. 
Hence the $t$-cut is an $m$-distinguishable sequence.
Take $S=1,1,1,2,2,2,3,3,3$, a cyclic $3$-distinguishable sequence, as an example. 
The $5$-cut of $S$, say $2,3,3,3,1,1,1,2,2,2,3$, is a $3$-distinguishable sequence.

The following proposition is from the $t$-cut action.

\begin{proposition}\label{prop:t-cut}
If there is a cyclic $m$-distinguishable sequence on $[k]$ of length $M$, then $M_m(k)\geq M+m-1$. 
In particular,
\begin{equation}\label{eq:M_m-inequality}
M_m(k)\geq M^c_m(k)+m-1.
\end{equation}
\end{proposition}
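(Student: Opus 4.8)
The plan is to read the proposition directly off the $t$-cut construction introduced just before its statement. First I would fix a cyclic $m$-distinguishable sequence $S=s_0s_1\cdots s_{M-1}$ on $[k]$ of length $M$ and form its $t$-cut as in~\eqref{eq:sequence-cut} for an arbitrary $t\in\mathbb{Z}_M$. The discussion preceding the proposition already asserts that this cut is an $m$-distinguishable sequence, so the only thing left to pin down is its length: counting the three segments---the tail $s_{t+1}\cdots s_{M-1}$, the wrapped-around head $s_0\cdots s_t$, and the repeated block $s_{t+1}\cdots s_{t+m-1}$---gives $(M-1-t)+(t+1)+(m-1)=M+m-1$ symbols, a value independent of $t$.

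Having produced an $m$-distinguishable sequence on $[k]$ of length $M+m-1$, the first claim follows at once. Since $M_m(k)$ is by definition the maximum length of any $m$-distinguishable sequence over $[k]$, the existence of this cut forces $M_m(k)\geq M+m-1$.

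For the ``in particular'' part I would simply instantiate the first claim at the extremal cyclic sequence. By definition of $M^c_m(k)$ there exists a cyclic $m$-distinguishable sequence on $[k]$ whose length $M$ equals $M^c_m(k)$; applying the first claim to this particular sequence yields $M_m(k)\geq M^c_m(k)+m-1$, which is exactly~\eqref{eq:M_m-inequality}.

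There is no substantial obstacle here: the proposition is essentially a length count attached to the $t$-cut together with a specialization to the longest cyclic sequence. If one wanted a fully self-contained argument rather than quoting the ``easy to see'' claim, the single point needing verification is that the cut is genuinely $m$-distinguishable, and I would establish this by observing that the $m$-windows of the cut, indexed by $j$ with $0\leq j\leq M-1$, are precisely the cyclic windows $S_m\bigl((t+1+j)\bmod M\bigr)$ of $S$. As $j$ ranges over $0,\ldots,M-1$ the starting index ranges over all of $\mathbb{Z}_M$, so distinctness of the cut's windows is inherited directly from the cyclic $m$-distinguishability of $S$.
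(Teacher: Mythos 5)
Your proposal is correct and follows exactly the paper's intended argument: the paper states the proposition as an immediate consequence of the $t$-cut construction in~\eqref{eq:sequence-cut}, which is precisely what you use. Your explicit length count and the identification of the cut's $m$-windows with the cyclic windows $S_m\bigl((t+1+j)\bmod M\bigr)$ simply fill in the details the paper leaves as ``easy to see.''
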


We define a useful notation 
\begin{equation}\label{eq:nonnegative-H}
    H^k_m\triangleq\binom{k+m-1}{m},
\end{equation}
which stands for the number of solutions of non-negative integers to the equation $x_1+x_2+\cdots+x_k=m$.
Note that $H^k_m$ is also the cardinality of the set $\mathcal{P}(m,k)$, where the notation is given in Section~\ref{section:math-definition}.

The following upper bounds on $M^c_{m}(k)$ and $M_{m}(k)$ is directly from the definition. 
%It also suggests that $N^{m}(k)$ is limited by $k$ and some intelligent use of the colors for allowing the maximal length and a corresponding % color 
%coding method % for the sequence construction 
%are % important
%necessary. 
%We will present some optimal constructions later.

\begin{proposition}\label{prop:upper}
For given positive integers $m$ and $k$, one has
\begin{equation} \label{eq:upper-cyclic}
M^c_{m}(k) \leq \binom{k+m-1}{m}
\end{equation}
and
\begin{equation} \label{eq:upper}
M_{m}(k) \leq \binom{k+m-1}{m}+m-1. 
\end{equation}
\end{proposition}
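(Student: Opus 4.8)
The plan is to prove both bounds by a straightforward pigeonhole argument, exploiting the fact that the number of admissible window multisets is finite and exactly equal to $H^k_m$. The whole proof is essentially a counting statement, so I would keep it short and emphasize the bookkeeping that separates the two cases.

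First I would recall that any window $S_m(t)$ is, by construction, a multiset of cardinality $m$ drawn from the color set $[k]$, hence an element of $\mathcal{P}(m,k)$. The size of $\mathcal{P}(m,k)$ is precisely the number of solutions in non-negative integers of $x_1+x_2+\cdots+x_k=m$, where $x_i$ records the multiplicity of color $i$; by the stars-and-bars count this equals $\binom{k+m-1}{m}=H^k_m$, exactly the quantity introduced in~\eqref{eq:nonnegative-H}.

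For the cyclic bound~\eqref{eq:upper-cyclic}, I would observe that a cyclic $m$-distinguishable sequence of length $M$ assigns to each of the $M$ starting positions $t\in\mathbb{Z}_M$ a window $S_m(t)$, and cyclic $m$-distinguishability means these $M$ windows are pairwise distinct. This exhibits an injection from $\mathbb{Z}_M$ into $\mathcal{P}(m,k)$, so $M\leq|\mathcal{P}(m,k)|=\binom{k+m-1}{m}$; taking the maximum over all such sequences yields $M^c_m(k)\leq\binom{k+m-1}{m}$. For the linear bound~\eqref{eq:upper} the only change is the count of windows: in a non-cyclic sequence $s_0s_1\cdots s_{M-1}$ the windows $S_m(t)$ are defined only for $0\leq t\leq M-m$, giving $M-m+1$ of them, and $m$-distinguishability again forces them to be distinct, so $M-m+1\leq\binom{k+m-1}{m}$, i.e.\ $M\leq\binom{k+m-1}{m}+m-1$, whence $M_m(k)\leq\binom{k+m-1}{m}+m-1$.

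The argument has no substantive obstacle; the only point demanding care is the off-by-one in the linear case, since correctly counting $M-m+1$ rather than $M$ windows is exactly what produces the additive $m-1$ term distinguishing~\eqref{eq:upper} from~\eqref{eq:upper-cyclic}. It is reassuring that this is consistent with Proposition~\ref{prop:t-cut}: the $t$-cut construction turns a cyclic sequence of length $M$ into a linear one of length $M+m-1$, matching precisely the gap between the two upper bounds.
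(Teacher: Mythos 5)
Your proof is correct and follows essentially the same route as the paper: a pigeonhole argument counting the number of possible $m$-multisets over $[k]$ via stars-and-bars, with $M$ windows in the cyclic case and $M-m+1$ windows in the linear case. The paper writes out only the linear bound and notes the cyclic case is analogous, whereas you spell out both, but the argument is identical.
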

\begin{proof}
We only consider Eqn.~\eqref{eq:upper}, since the cyclic version can be dealt with in the same way.
Let $S=s_0s_1\cdots s_{M-1}$ be a longest $m$-distinguishable sequence on $[k]$.
By definition, the $M-m+1$ multisets $S_m(t)$, $0\leq t\leq M-m$, are all distinct.
By representing each of these multiset as $\{1^{e_1},\ldots,k^{e_k}\}$, where $e_s$ indicates the multiplicity of the element $s$, those non-negative multiplicities must satisfy $e_1+e_2+\cdots+e_k=m$.
By Eqn.~\eqref{eq:nonnegative-H}, the number of all possible $S_m(t)$ is $\binom{k+m-1}{m}$.
It follows that $M-m+1\leq \binom{k+m-1}{m}$.
\end{proof}

\begin{remark}\rm 
The concept of cyclic $m$-distinguishable sequences was recently discussed in~\cite{ICALP24} and generalized to higher dimensional cases, called \textit{grid colourings}.
The key structure, \textit{vector sum packing}, of the proposed solution is a cyclic $m$-distinguishable sequence with an assumption that the sums of any $m$ consecutive elements are all distinct. 
The construction method and result neither affect nor cover our subsequent findings.
\end{remark}

%%%%%%%%%%%%%%%%%%%%%%%%%%%%%%%%%%%%%%%%%%%%%%%%
\subsection{Bounds of $M_m(k)$ Based on Previously Known Results}
\label{section:lower-bounds}
%%%%%%%%%%%%%%%%%%%%%%%%%%%%%%%%%%%%%%%%%%%%%%%%

A cyclic $m$-distinguishable sequence on $[k]$ in which every multiset in $\mathcal{P}(m,k)$ appears exactly once as a color multiset is known as an $(m,k)$-\textit{Mcycle}~\cite{HJZ09}.
In other words, an Mcycle is precisely a cyclic $m$-distinguishable sequence whose length attains the theoretical upper bound given in~Eqn.~\eqref{eq:upper-cyclic}.
It should be noted that the requirement for a (cyclic) $m$-distinguishable sequence is merely that all color multisets are distinct; it does not demand that all possible combinations appear.

By definition, any permutation of $[k]$ is a $(1,k)$-Mcycle.
Let $K_k$ denote the complete graph of $k$ vertices labeled by elements in $[k]$.
An \textit{Eulerian circuit} of a graph is a circuit that contains all edges.
See Fig.~\ref{fig:Eulerian}\,(a) for an example of an Eulerian circuit of $K_5$.
Let $G_k$ denote the graph obtained from $K_k$ by adding a self-loop at each vertex.
Obviously, the list of vertices traveled by an Eulerian circuit of $G_k$ is a $(2,k)$-Mcycle.
As a graph contains an Eulerian circuit if and only if each vertex's degree is even, a $(2,k)$-Mcycles exists for all odd $k$.
See \cite[Theorem 1.2.26]{West} for more details about Eulerian circuits.

For general $m$, the sufficient condition of the existence of an $(m,k)$-Mcycle is that $m$ divides $\binom{k+m-1}{m}$.
%Note that ``$|$'' means the divides operator.
It was conjectured that the necessary part holds for any case if $k$ is sufficiently large.

\begin{conjecture}[\cite{HJZ09}]\label{conj:Mcycle}
For all $m$ there is an integer $y_0(m)$ such that, for $k\geq y_0(m)$, an $(m,k)$-Mcycle exists if and only if $m$ {divides} $\binom{k+m-1}{m}$.
\end{conjecture}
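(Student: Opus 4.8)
Because this statement is an open conjecture attributed to \cite{HJZ09} rather than a result established in the present paper, the honest aim of a proposal is to lay out a plausible line of attack and to isolate the essential obstruction. The natural framework generalizes the $m=2$ situation already described, where $(2,k)$-Mcycles are exactly the Eulerian circuits of $G_k$ augmented with self-loops. The plan is to model an $(m,k)$-Mcycle as a closed walk in a transition multigraph $\Gamma_{m,k}$ whose vertices are the $(m-1)$-multisets of $[k]$ and whose edges are the $m$-multisets of $[k]$: a window $T=S_m(t)$ links the overlap $(m-1)$-multiset $\{s_t,\ldots,s_{t+m-2}\}$ preceding it to the overlap $\{s_{t+1},\ldots,s_{t+m-1}\}$ following it. An Mcycle is then precisely a closed walk traversing every $m$-multiset exactly once, so existence of the Mcycle is recast as existence of an Eulerian circuit in $\Gamma_{m,k}$.

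The necessity direction is the routine part. On a hypothetical Mcycle $S=s_0s_1\cdots s_{L-1}$ of length $L=\binom{k+m-1}{m}$, each position belongs to exactly $m$ cyclic windows, so a fixed color $c\in[k]$ is counted $m\,n_c$ times across all windows, where $n_c$ is its number of occurrences in $S$. Summing the multiplicities of $c$ over all $m$-multisets of $[k]$ instead gives $mL/k$ by symmetry. Equating the two forces $n_c=L/k$, so the alphabet size $k$ must divide $\binom{k+m-1}{m}$; this is exactly the degree-balance condition needed at each vertex of $\Gamma_{m,k}$ for an Eulerian circuit to have any chance of existing, and one checks it reduces to ``$k$ odd'' in the $m=2$ case, consistent with the Eulerian description above.

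The content of the conjecture lies entirely in the converse: for $k\ge y_0(m)$ the divisibility condition should suffice. The subtlety that makes this hard is that, unlike the classical de Bruijn graph for strings, the multiset $T$ does not record which of its elements is ``dropped'' versus ``added'' as the window slides, so the edge $T$ carries no canonical orientation between its two incident vertices $T\setminus\{s_t\}$ and $T\setminus\{s_{t+m-1}\}$. Building an Mcycle therefore amounts to choosing, simultaneously for every $m$-multiset, an orientation that renders $\Gamma_{m,k}$ a connected, in/out-degree-balanced digraph admitting a single Eulerian circuit. The divisibility condition pins down only the global frequency balance; it does not by itself exhibit such a consistent orientation.

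I expect this orientation-and-connectivity step to be the decisive obstacle, and it is precisely what keeps the statement conjectural. A realistic program would construct the required orientation recursively --- splicing Eulerian sub-circuits obtained on smaller alphabets, or bootstrapping from the $m=2$ base case --- and then verify connectivity of the assembled digraph, first proving the asymptotic claim for each fixed $m$ and only afterwards attempting to control the threshold $y_0(m)$ uniformly. Settling small fixed $m$, where $\Gamma_{m,k}$ is simple enough to orient by hand, is the natural first milestone; the genuine difficulty is that no choice of orientation is known to work for all sufficiently large $k$ at once, which is exactly why the \emph{if} direction remains open.
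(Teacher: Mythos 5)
You were right not to manufacture a proof: this statement is an open conjecture quoted from \cite{HJZ09}, and the paper itself offers no proof of it --- it only remarks that the conjecture is verified for $m=1,2$ (permutations, and Eulerian circuits of $G_k$ with self-loops) and cites Theorem~\ref{thm:Mcycle-known} as the best partial progress, for $m\in\{3,4,6\}$ with $\gcd(m,k)=1$. Your reduction to an Eulerian-type problem, and especially your identification of the non-canonical orientation of edges (the window does not record which element is dropped and which is added) as the reason sufficiency is hard, matches the actual state of the art. Your necessity count --- each symbol occurrence lies in exactly $m$ cyclic windows, so $m n_c = m L/k$ with $L=\binom{k+m-1}{m}$ --- is sound, and it is essentially the same double counting the paper deploys later in the proof of Theorem~\ref{thm:new-upper-bound}, where $\sum_{t}\varphi_a(S_p(t))=p\cdot\Psi_a$ is used to rule out covering all multisets when $p\mid k$.

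One discrepancy you should not let pass silently: your counting establishes $k \mid \binom{k+m-1}{m}$, whereas the statement as printed requires $m \mid \binom{k+m-1}{m}$. These are genuinely different conditions, and yours is the correct one (it is the condition of \cite{HJZ09}; equivalently $m \mid \binom{k+m-1}{m-1}$, since $\binom{k+m-1}{m}/k=\binom{k+m-1}{m-1}/m$). The printed condition cannot be what is intended: for $m=2$ and any $k\equiv 1 \pmod 4$, a $(2,k)$-Mcycle exists because $k$ is odd, yet $\binom{k+1}{2}=k(k+1)/2$ is odd, so the printed ``only if'' direction would fail for infinitely many $k$ --- contradicting the paper's own claim that the conjecture holds for $m=2$, as well as its later observation that $k \mid \binom{k+p-1}{p}$ exactly when $p\nmid k$. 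So your ``routine part,'' as written, proves the only-if direction of a statement different from the printed one; in fact it corrects a typo in the printed statement, and you should say so explicitly rather than leave the mismatch unacknowledged.
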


Our previous discussion verifies that Conjecture~\ref{conj:Mcycle} holds for $m=1,2$.
Here is the most up-to-date result on this conjecture.

\begin{theorem}[\cite{HJZ09}]\label{thm:Mcycle-known}
Let $y_0(3)=4, y_0(4)=5$ and $y_0(6)=11$.
Then, for $m\in\{3,4,6\}$ and $k\geq y_0(m)$, an $(m,k)$-Mcycle exists whenever $k$ is relatively prime to $m$, i.e., $\gcd(m,k)=1$.
\end{theorem}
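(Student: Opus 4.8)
The plan is to recast the existence of an $(m,k)$-Mcycle as an Eulerian-circuit problem, exactly as the text already does implicitly for $m=1,2$. Form the \emph{transition digraph} $T_{m,k}$ whose vertices are the $(m-1)$-multisets of $[k]$ and whose edges are the $m$-multisets: a window $W$ is realized as a directed edge joining the $(m-1)$-multiset it overlaps with its predecessor to the one it overlaps with its successor, i.e.\ an edge $W\setminus\{a\}\to W\setminus\{b\}$ for the ``leaving'' color $a$ and ``entering'' color $b$. A cyclic sequence in which every $m$-multiset occurs exactly once is then precisely an Eulerian circuit of $T_{m,k}$ using each $m$-multiset as one edge. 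By Euler's theorem for digraphs, such a circuit exists once $T_{m,k}$ (under a suitable choice of the orientation $W\setminus\{a\}\to W\setminus\{b\}$ for each $W$) is connected and balanced, meaning the in-degree equals the out-degree at every vertex. Thus the whole problem reduces to producing a single edge-assignment that is simultaneously connected and balanced. For $m=2$ this is just the statement, already noted, that a $(2,k)$-Mcycle is an Eulerian circuit of $K_k$ with a loop at each vertex.

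The coprimality hypothesis enters through the balance condition. In any Mcycle each color must appear equally often: summing color multiplicities over all windows gives $m\binom{k+m-1}{m}/k$ occurrences of each fixed color, and since each position lies in exactly $m$ windows, each color must occupy $\binom{k+m-1}{m}/k$ positions. Hence $k$ must divide $\binom{k+m-1}{m}$, and this is exactly what $\gcd(m,k)=1$ supplies: the identity $m\binom{k+m-1}{m}=k\binom{k+m-1}{m-1}$ shows $k\mid m\binom{k+m-1}{m}$, so coprimality forces $k\mid\binom{k+m-1}{m}$. I would therefore first verify that $\gcd(m,k)=1$ yields the divisibility relations needed for a uniform, balanced distribution of colors across the vertices of $T_{m,k}$, leaving connectivity to be guaranteed separately by the size hypothesis $k\ge y_0(m)$.

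The hard part is to exhibit one orientation of the $m$-multisets that is balanced at \emph{every} $(m-1)$-multiset while keeping $T_{m,k}$ connected; this is where the restriction to $m\in\{3,4,6\}$ and the explicit thresholds come from. I would attempt an explicit, arithmetic construction: partition the $m$-multisets (or the underlying cyclic structure on the colors) into $m$ classes indexed by a residue mod $m$, use $\gcd(m,k)=1$ so that adding a fixed color cyclically and transitively permutes these classes, and route the edges so that each vertex sends and receives the same number within each class. Connectivity for large $k$ should follow because any two $(m-1)$-multisets can be linked by local single-color swaps once $k$ exceeds a small bound, and that bound is then pinned down by finite case analysis, yielding $y_0(3)=4$, $y_0(4)=5$, $y_0(6)=11$. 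I expect the genuine obstacle to be the simultaneity requirement: balancing the degrees is an integer-flow feasibility question that coprimality makes solvable in principle, but weaving a balanced orientation that also stays connected seems to force a case-by-case treatment of the modulus $m$, which is presumably why \cite{HJZ09} settles $m\in\{3,4,6\}$ rather than all $m$ coprime to $k$.
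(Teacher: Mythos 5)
There is a genuine gap at the foundation of your reduction. For $m\geq 3$ the correspondence between $(m,k)$-Mcycles and Eulerian circuits of your transition digraph $T_{m,k}$ is only one-directional: a valid cyclic sequence does trace out an Eulerian circuit, but an Eulerian circuit of a balanced, connected $T_{m,k}$ need not come from any sequence. The reason is that the vertices are \emph{unordered} $(m-1)$-multisets, so the digraph forgets the order in which colors entered the sliding window; reconstructing a sequence from a circuit requires the first-in-first-out condition that the color dropped at step $t$ equal the color that entered at step $t-m+1$, and Euler's theorem gives no control over this. (This is exactly why $m=2$ is easy: there the vertices are single colors, the vertex sequence of the circuit \emph{is} the color sequence, and no such constraint arises.) Choosing the orientation of each edge ``suitably'' cannot repair the problem, because the obstruction lives in the circuit, not in the orientation; this failure is precisely what makes Mcycle existence hard for $m\geq 3$ and why \cite{HJZ09} resorts to explicit constructions rather than a balance-plus-connectivity argument.

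Second, even granting the framework, your proposal establishes only a necessary-condition check: the computation that $\gcd(m,k)=1$ forces $k\mid\binom{k+m-1}{m}$ is correct, but it merely shows the balance requirement is not arithmetically obstructed, not that a balanced connected orientation (let alone a FIFO-compatible circuit) exists. The actual content of the theorem --- constructions for $m\in\{3,4,6\}$ and the thresholds $y_0(3)=4$, $y_0(4)=5$, $y_0(6)=11$ --- is deferred to ``an explicit, arithmetic construction'' and ``finite case analysis'' that are never carried out, so the proof is incomplete exactly where the theorem's difficulty lies. For comparison: the paper itself does not prove this statement (it is imported from \cite{HJZ09}), and the method used there, which the paper adapts in its own Theorem~\ref{thm:m=3}, is a recursive gluing construction --- a cyclic $3$-distinguishable sequence on $[k-3]$ is concatenated with explicitly built blocks covering all multisets that meet the three new colors, with induction on $k$. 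That inductive gluing, not an Eulerian-orientation argument, is what yields both the general existence and the explicit small-$k$ thresholds.
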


We immediately have the following result on $M_m(k)$ by Propositions~\ref{prop:t-cut}--\ref{prop:upper},  Theorem~\ref{thm:Mcycle-known} and aforementioned arguments on $m=1,2$ cases.

\begin{corollary}\label{corol:Mcycle} 
$M_1(k)=k$ for all $k$, and $M_2(k)=\binom{k+1}{2}+1$ for all odd $k$.
For $m\in\{3,4,6\}$, $k\geq y_0(m)$ with $\gcd(m,k)=1$, where $y_0(3),y_0(4),y_0(6)$ are given in Theorem~\ref{thm:Mcycle-known}, the following holds:
\begin{align*}
    M_m(k) = \binom{k+m-1}{m}+m-1.
\end{align*}
\end{corollary}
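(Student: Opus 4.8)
The plan is to squeeze $M_m(k)$ between the counting upper bound and an Mcycle-based lower bound, case by case, and to observe that the two bounds coincide in each regime. Proposition~\ref{prop:upper} already hands me the inequality $M_m(k)\leq\binom{k+m-1}{m}+m-1$ for all $m$ and $k$, so the whole task reduces to producing a matching lower bound. The unifying device is that an $(m,k)$-Mcycle is, by definition, a cyclic $m$-distinguishable sequence of length exactly $\binom{k+m-1}{m}$, and this length attains the cyclic upper bound of Eqn.~\eqref{eq:upper-cyclic}; consequently, whenever such a cycle exists we may conclude $M^c_m(k)=\binom{k+m-1}{m}$, and then the $t$-cut inequality $M_m(k)\geq M^c_m(k)+m-1$ from Proposition~\ref{prop:t-cut} delivers precisely the lower bound needed.

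First I would handle $m=1$: any permutation of $[k]$ is a $(1,k)$-Mcycle, which exists for every $k$, so $M^c_1(k)=\binom{k}{1}=k$; since $m-1=0$, the two bounds read $k\leq M_1(k)\leq k$, giving $M_1(k)=k$. Next, for $m=2$ and $k$ odd, the earlier Eulerian-circuit construction (the complete graph on $[k]$ with a self-loop at each vertex has every degree equal to the even number $k+1$, hence admits an Eulerian circuit) yields a $(2,k)$-Mcycle of length $\binom{k+1}{2}$; the sandwich then gives $M_2(k)=\binom{k+1}{2}+1$ for all odd $k$.

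Finally, for $m\in\{3,4,6\}$ with $k\geq y_0(m)$ and $\gcd(m,k)=1$, Theorem~\ref{thm:Mcycle-known} guarantees an $(m,k)$-Mcycle, so $M^c_m(k)=\binom{k+m-1}{m}$ and the identical $t$-cut step produces the lower bound $\binom{k+m-1}{m}+m-1$, which matches Proposition~\ref{prop:upper}. Since the counting upper bound, the $t$-cut lower bound, and each Mcycle are all already established, there is no substantive obstacle remaining; the only point demanding attention is to verify, separately in each of the three regimes, that the relevant Mcycle genuinely exists under the stated hypotheses (unconditionally for $m=1$, for odd $k$ when $m=2$, and under the coprimality and size conditions of Theorem~\ref{thm:Mcycle-known} for $m\in\{3,4,6\}$), because the lower-bound argument is otherwise verbatim the same across all cases.
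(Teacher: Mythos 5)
Your proposal is correct and follows essentially the same route as the paper: the paper derives this corollary directly by combining the counting upper bound of Proposition~\ref{prop:upper}, the $t$-cut lower bound of Proposition~\ref{prop:t-cut}, and the existence of $(m,k)$-Mcycles (permutations for $m=1$, Eulerian circuits of $G_k$ with self-loops for odd $k$ when $m=2$, and Theorem~\ref{thm:Mcycle-known} for $m\in\{3,4,6\}$). You have simply spelled out the sandwich argument that the paper states in one line, including the correct parity check that each vertex of $G_k$ with a self-loop has even degree $k+1$ exactly when $k$ is odd.
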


For the cases missing in Corollary~\ref{corol:Mcycle} when $m=2,3,4,6$, we have the following lower bounds.

\begin{corollary}\label{corol:Mcycle-lower}
Let $y_0(2)=2$, $y_0(3)=4, y_0(4)=5$ and $y_0(6)=11$. The following holds:
\begin{enumerate}[(i)]
\item For $m\in\{2,3,4\}$, $k\geq n_o(m)$ and $\gcd(m,k)\neq 1$, we have
    \begin{align*}
    M_m(k) \geq \binom{k+m-2}{m}+2m-1.
    \end{align*} 
\item For $m=6$, $k\geq y_0(6)$, we have
\begin{align*}
M_6(k) \geq 
\begin{cases}
    \binom{k+4}{6}+11 & \text{if }k\equiv 0,2\, (\bmod\, 6), \\
    \binom{k+3}{6}+17 & \text{if }k\equiv 3\, (\bmod\, 6), \\
    \binom{k+2}{6}+23 & \text{if }k\equiv 4\, (\bmod\, 6).
\end{cases}
\end{align*}
\end{enumerate}
\end{corollary}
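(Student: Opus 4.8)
The plan is to deduce both parts from a single elementary construction: take a long cyclic $m$-distinguishable sequence that is known to exist on a reduced color alphabet, open it into a linear sequence via a $t$-cut, and then append one block of repeated symbols for each color that was dropped. The technical heart is the following gadget, which I would isolate first.

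\emph{Claim.} If $S$ is an $m$-distinguishable sequence over an alphabet $A$ and $c_1,\dots,c_r$ are distinct symbols with $c_i\notin A$, then the concatenation
\[
S' \;=\; S\, c_1^m c_2^m \cdots c_r^m,
\]
where $c_i^m$ denotes $m$ consecutive copies of $c_i$, is $m$-distinguishable over $A\cup\{c_1,\dots,c_r\}$ and has length $|S|+rm$.

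Granting the Claim, here is how I would assemble the bounds. For a given $k$, let $r\ge 1$ be the least positive integer with $\gcd(m,k-r)=1$. For $m\in\{2,3,4\}$ with $\gcd(m,k)\ne 1$ one checks directly that $k-1$ is coprime to $m$, so $r=1$; for $m=6$ a short case analysis on $k\bmod 6$ gives $r=1$ when $k\equiv 0,2$, $r=2$ when $k\equiv 3$, and $r=3$ when $k\equiv 4$. Provided $k-r\ge y_0(m)$, an $(m,k-r)$-Mcycle on the colors $[k-r]$ exists by Theorem~\ref{thm:Mcycle-known} (or by the Eulerian-circuit argument when $m=2$); it has length $\binom{k-r+m-1}{m}$. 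A $t$-cut (Proposition~\ref{prop:t-cut}) turns it into a linear $m$-distinguishable sequence $S$ on $[k-r]$ of length $\binom{k-r+m-1}{m}+m-1$. Applying the Claim with $c_i=k-r+i$ then yields an $m$-distinguishable sequence on $[k]$ of length $\binom{k-r+m-1}{m}+(r+1)m-1$, so $M_m(k)\ge \binom{k-r+m-1}{m}+(r+1)m-1$. Substituting $r$ reproduces $\binom{k+m-2}{m}+2m-1$ in part (i), and $\binom{k+4}{6}+11$, $\binom{k+3}{6}+17$, $\binom{k+2}{6}+23$ in part (ii) verbatim.

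The substance is proving the Claim, which I would do by classifying the length-$m$ windows of $S'$ by their symbol content. Windows lying entirely inside $S$ use only symbols of $A$ and are pairwise distinct since $S$ is $m$-distinguishable. Windows lying entirely inside the suffix either equal a full block $c_i^m$ or straddle two adjacent blocks as $c_i^{m-s}c_{i+1}^{s}$; because each block has length exactly $m$, such a window touches at most one block boundary and is determined by the ordered pair of colors it meets together with the split $s$, so these are mutually distinct and, containing only new symbols, are distinct from the $A$-windows. The only remaining windows straddle the junction between $S$ and the first block $c_1^m$: each such window contains at least one symbol of $A$ (hence differs from every pure-suffix window) together with between $1$ and $m-1$ copies of $c_1$ and no other new symbol, and the number of $c_1$'s strictly increases as the window slides, forcing these to be pairwise distinct and distinct from the pure-$A$ windows. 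This exhausts all windows and establishes distinguishability.

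The main obstacle is bookkeeping rather than depth: one must verify that no boundary-straddling window ever coincides with a window of the reduced Mcycle, which is precisely why each appended block is chosen of length $m$ (so the new-color multiplicities read off the starting position and no window spans three blocks). A secondary point requiring care is the threshold on $k$: the argument needs $k-r\ge y_0(m)$ so the $(m,k-r)$-Mcycle is guaranteed, and the finitely many smaller values of $k$ (if any) would have to be checked separately or absorbed into the stated constants $n_o(m)$ and $y_0(6)$.
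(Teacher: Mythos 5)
Your proposal is correct and follows essentially the same route as the paper's proof: cut an $(m,k-t)$-Mcycle on the reduced alphabet (with $t$ minimal so that $\gcd(m,k-t)=1$) and append $m$ consecutive copies of each missing color, with $t=1$ giving part (i) and $t=1,2,3$ giving the three cases of part (ii). Your ``Claim'' and its window-classification argument merely make explicit the distinguishability verification that the paper states without proof, so the two arguments coincide in substance.
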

\begin{proof}
(i) For each case, we have $\gcd(m,k-1)=1$.
By Theorem~\ref{thm:Mcycle-known}, there exists an $(m,k-1)$-Mcycle, in which elements are all in $[k-1]$.
Then, the result follows by appending $m$ consecutive ``$k$'' at the end of any $c$-cut of the $(m,k-1)$-Mcycle.

(ii) The $m=6$ cases can be dealt with similarly by finding the smallest $t$ such that $\gcd(6,k-t)=1$.
The resulting sequence is obtained by appending $m$ consecutive ``$k$'', $m$ consecutive ``$k-1$'', down to $m$ consecutive ``$k-t+1$'', at the end of any $c$-cut of the $(m,k-t)$-Mcycle, in which elements are all in $[k-t]$.
The result hence follows by $t=1$ for $k\equiv 0,2$, $t=2$ for $k\equiv 3$, and $t=3$ for $k\equiv 4$ (mod $6$).
\end{proof}

Table~\ref{tab:M_m(k)-bounds} collects all known $M_m(k)$ based on Proposition~\ref{prop:upper}, and Corollaries~\ref{corol:Mcycle}, \ref{corol:Mcycle-lower}. 
Symbol ``$\star$'' means the corresponding lower bound matches the theoretical upper bound.

\begin{table}[h]
    \centering
    \begin{tabular}{|c|c|c|c|}
    \hline
       \multirow{2}{*}{$m$} & \multirow{2}{*}{$k$ (mod $m$)} 
        & lower bound & upper bound \\
        & & of $M_m(k)$ & of $M_m(k)$ \\ \hline 
        \multirow{2}{*}{$2$} & $1$ & $\star\binom{k+1}{2}+1$ & \multirow{2}{*}{$\binom{k+1}{2}+1$} \\ 
            & $0$ & $\binom{k}{2}+3$ & \\ \hline
        \multirow{2}{*}{$3$} & $1,2$ & $\star\binom{k+2}{3}+2$ & \multirow{2}{*}{$\binom{k+2}{3}+2$} \\
            & $0$ & $\binom{k+1}{3}+5$ & \\ \hline
        \multirow{2}{*}{$4$} & $1,3$ & $\star\binom{k+3}{4}+3$ & \multirow{2}{*}{$\binom{k+3}{4}+3$} \\ 
            & $0,2$ & $\binom{k+2}{4}+7$ & \\ \hline
        \multirow{4}{*}{$6$} & $1,5$ & $\star\binom{k+5}{6}+5$ & \multirow{4}{*}{$\binom{k+5}{6}+5$} \\ 
            & $0,2$ & $\binom{k+4}{6}+11$ &  \\
            & $3$ & $\binom{k+3}{6}+17$ &  \\
            & $4$ & $\binom{k+2}{6}+23$ &  \\ \hline
    \end{tabular}
    \medskip
    \caption{Known values and bounds on $M_m(k)$ for $m=2,3,4$ and $6$.}
    \label{tab:M_m(k)-bounds}
\end{table}

Another combinatorial structure in the literature related to cyclic $m$-distinguishable sequences is the \textit{universal cycles}~\cite{CDG92}, \textit{Ucycles} for short.
An $(m,k)$-Ucycle is a cyclic $m$-distinguishable sequence $S$ in which there is no repeated elements in any $S_m(t)$ and every $m$-subset of $[k]$ appears exactly once as a $S_m(t)$ for some $t$.
Unlike an Mcycle, a Ucycle does not admit multisets.
Note that an $(m,k)$-Ucycle is of length $\binom{k}{m}$.
Similar to Mcycles, it was conjectured in~\cite{CDG92} that for any $m$ there is an integer $y_0(m)$ such that, $k\geq y_0(m)$, an $(m,k)$-Ucycle exists if and only if $m$ divides $\binom{k-1}{m-1}$. 
The conjecture has been proved in~\cite{GJKO20} using probabilistic methods; however, a constructive approach has yet to be found.
See~\cite{GJKO20,Zhang20,Hurlbert94,CHHM09,BG11} for more information about Ucycles and related topics.

\begin{remark} \rm
Theorem~\ref{thm:Mcycle-known} is based on a construction of Ucycles given in~\cite{Hurlbert94}.
The construction relies on a special structure, called \textit{good $d$-patterns}, that enables an $(m,k)$-Ucycle to be built recursively in a systematic manner.
The existence of this structure is characterized precisely by $m=3,4$ or $6$ and $\gcd(m,k)=1$, which is the reason for the exclusion of the case $m=5$ in Theorem~\ref{thm:Mcycle-known} and Corollaries~\ref{corol:Mcycle} and \ref{corol:Mcycle-lower}.
\end{remark}

\subsection{New Cyclic $m$-distinguishable Sequences}
\label{section:construction-method}
%%%%%%%%%%%%%%%%%%%%%%%%%%%%%%%%%%%%%%%%%%%%%%%%

As an $m$-distinguishable sequence can be obtained from a cyclic $m$-distinguishable sequence, i.e., Proposition~\ref{prop:t-cut}, from now on we will study $M_m^c(k)$ in more details.

Recall that an $(m,k)$-Mcycle exists only when $m$ divides $\binom{k+m-1}{m}$.
Let $p$ be a prime.
Observe that $k$ divides $\binom{k+p-1}{p}$ if and only if $k$ is not divisible by $p$.
This indicates that when the block size $m=p$ and $k$ is divisible by $p$, the prior known requirement of the existence of a $(p,k)$-Mcycle does not hold, and thus $M^c_p(k)$ is strictly less than $\binom{k+p-1}{p}$.
Our first task in this subsection is to derive a tighter upper bound in this case.

\begin{theorem}\label{thm:new-upper-bound}
Suppose $p$ is a prime and $k$ is divisible by $p$. 
Then the following holds:
\begin{align*}
    M^c_p(k) \leq \binom{k+p-1}{p}-\frac{k}{p}.
\end{align*}
\end{theorem}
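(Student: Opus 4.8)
The plan is to bound the number of $p$-multisets that a cyclic $p$-distinguishable sequence is forced to \emph{omit}. Write $\mathcal{M}=\mathcal{P}(p,k)$ for the set of all $\binom{k+p-1}{p}$ multisets of size $p$ on $[k]$. A cyclic $p$-distinguishable sequence $S=s_0\cdots s_{M-1}$ of length $M=M^c_p(k)$ realizes the set $\mathcal{W}=\{S_p(t):t\in\mathbb{Z}_M\}\subseteq\mathcal{M}$ of exactly $M$ distinct windows (distinctness is precisely cyclic $p$-distinguishability). So the claim is equivalent to showing that the missing set $\mathcal{U}=\mathcal{M}\setminus\mathcal{W}$ satisfies $|\mathcal{U}|\geq k/p$, after which $M^c_p(k)=\binom{k+p-1}{p}-|\mathcal{U}|$ gives the stated bound.

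First I would record a modular conservation law obeyed by every distinguishable sequence. For a color $c\in[k]$ and a window $W$, let $e_c(W)$ be the multiplicity of $c$ in $W$. Counting incidences of color $c$ over all windows, each of the $f_c$ occurrences of $c$ in the cyclic sequence lies in exactly $p$ consecutive windows, so $\sum_{W\in\mathcal{W}}e_c(W)=\sum_{t\in\mathbb{Z}_M}e_c(S_p(t))=p f_c\equiv 0\pmod p$. Hence the total multiplicity of every color over the \emph{used} windows is divisible by $p$.

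Next I would transfer this to the complementary set. A short symmetry/double-counting argument gives $\sum_{W\in\mathcal{M}}e_c(W)=\binom{k+p-1}{p-1}$ for every $c$: all $k$ colors contribute equally, and the grand total is $\sum_{W\in\mathcal{M}}\sum_c e_c(W)=p\binom{k+p-1}{p}=k\binom{k+p-1}{p-1}$. Subtracting the conservation law yields $\sum_{W\in\mathcal{U}}e_c(W)\equiv\binom{k+p-1}{p-1}\pmod p$ for every color $c$. The arithmetic heart of the argument is then to evaluate $\binom{k+p-1}{p-1}\bmod p$ under the hypothesis $p\mid k$: since $k+1,\dots,k+p-1$ are $p-1$ consecutive integers congruent to $1,\dots,p-1$ modulo $p$, none divisible by $p$, one obtains $\binom{k+p-1}{p-1}=\frac{(k+1)\cdots(k+p-1)}{(p-1)!}\equiv 1\pmod p$.

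Finally I would assemble the count. The previous step shows $\sum_{W\in\mathcal{U}}e_c(W)\equiv 1\pmod p$, so this sum is a nonnegative integer that is nonzero, hence $\geq 1$, for each of the $k$ colors. Summing over colors and exchanging the order of summation gives $p\,|\mathcal{U}|=\sum_{W\in\mathcal{U}}\sum_c e_c(W)=\sum_c\sum_{W\in\mathcal{U}}e_c(W)\geq k$, whence $|\mathcal{U}|\geq k/p$ and $M^c_p(k)\leq\binom{k+p-1}{p}-k/p$. The only delicate point I anticipate is the modular evaluation of $\binom{k+p-1}{p-1}$; the conservation law and the closing aggregation are routine once the congruence $\binom{k+p-1}{p-1}\equiv 1\pmod p$ is in hand, and it is exactly the primality of $p$ together with $p\mid k$ that forces each color to leave a nonzero residue, and therefore a genuine deficit, in $\mathcal{U}$.
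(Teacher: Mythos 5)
Your proof is correct and follows essentially the same route as the paper's: both rest on the conservation law that each occurrence of a color lies in exactly $p$ cyclic windows (so every color's incidence over the used windows is $\equiv 0 \pmod p$), combined with the fact that its incidence over \emph{all} $p$-multisets is $\equiv 1 \pmod p$ when $p \mid k$, which forces every color to appear in some omitted multiset and yields the $k/p$ deficit after a factor-$p$ count. The only cosmetic differences are that you evaluate the total incidence in closed form as $\binom{k+p-1}{p-1}$ and reduce it modulo $p$ via consecutive integers, whereas the paper obtains the same nonzero residue by stratifying the multisets containing a fixed color by its multiplicity, and your closing aggregation (summing the per-color deficits and using $\sum_c e_c(W)=p$) is equivalent to the paper's observation that each omitted multiset can account for at most $p$ colors.
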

\begin{proof}
Assume $M=M^c_p(k)$.
Let $S=s_0s_1\cdots s_{M-1}$ be a longest cyclic $p$-distinguishable sequence on $[k]$.
Fix an element $a\in[k]$.
For a multiset $A$ on $[k]$, let $\varphi_a(A)$ denote the number of $a$'s in $A$.

Denote by $\mathcal{B}_a$ the collection of all $p$-multisets of $[k]$ that contain at least one element $a$.
Let us count the number of appearances of $a$'s in $\mathcal{B}_a$.
For $0\leq i\leq p-1$, let $\mathcal{B}_{a,i}\subset\mathcal{B}_a$ denote the set of $p$-multisets containing exactly $(p-i)$ $a$'s, i.e.,
\begin{align*}
    \mathcal{B}_{a,i}=\{A\in\mathcal{B}_a:\,\varphi_a(A)=p-i\}.
\end{align*}
Observe that $|\mathcal{B}_{a,i}|=H^{k-1}_i$, that is, the number of solutions of non-negative integers to the equation $e_1+\cdots+e_{a-1}+e_{a+1}+\cdots e_k=i$, where $e_s$ indicates the multiplicity of the element $s$.
Then, 
\begin{align*}
    \sum_{A\in\mathcal{B}_{a,i}}\varphi_a(A)=(p-i)H^{k-1}_i.
\end{align*}
It is not hard to see from Eqn.~\eqref{eq:nonnegative-H} that $H^{k-1}_i$ is a multiple of $p$ for $i=0$ and $2\leq i\leq p-1$.
When $i=1$, $\sum_{A\in\mathcal{B}{a,_1}}\varphi(A)=(p-1)(k-1)$, which is not divisible by $p$ due to the assumption that $p$ divides $k$. 
As $\mathcal{B}_a$ is the disjoint union of $\mathcal{B}_{a,i}$, $i=0,1,\ldots,p-1$, the value $\sum_{i=0}^{p-1}\sum_{A\in\mathcal{B}_{a,i}}\varphi_a(A)$ is not divisible by $p$, that is,
\begin{equation}\label{eq:new-UB-1}
\sum_{A\in\mathcal{B}_a}\varphi_a(A) \text{ is not divisible by } p.
\end{equation}

Let us consider the longest cyclic $p$-distinguishable sequence $S$.  
Let $n_a$ be the number of $a$'s in $S$ and let $t_i$ be the places for the appearance $a$ in $S$, for $i=1,2, \cdots, n_a$. 
Recall $S_p(t)$, $0\leq t\leq M-1$, denotes the $p$-multiset $\{s_t,s_{t+1},\ldots,s_{t+p-1}\}$, where the indices are taken modulo $M$.
For any fixed $t_i$, the element $a$ located at $t_i$ is contained in exactly $p$ $p$-multisets, say $S_p(t_i-p+1)$, $S_p(t_i-p+2)$, $\ldots, S_p(t_i)$.
Summing up $t_i$ for $i=1,2, \cdots, n_a$, it follows that 
\begin{equation}\label{eq:new-UB-2}
    \sum_{t=0}^{M-1}\varphi_a(S_p(t)) = \sum_{i=1}^{n_a}p = n_ap.
\end{equation}

%Let $\Psi_a$ be the number of $a$'s in $S$. Recall $S_p(t)$, $0\leq t\leq M-1$, denotes the multiset $\{s_t,s_{t+1},\ldots_{t+p-1}\}$, there the indices are taken modulo $M$. For any appearance $a$ in $S$, it is contained in $S_p(t-p+1)$, $S_p(t-p+2)$, $\ldots, S_p(t)$, where $t$ is assumed to be the position of such a $a$. Therefore, we have
%\begin{equation}\label{eq:new-UB-2}  \sum_{t=0}^{M-1}\varphi_a(S_p(t)) = p\cdot\Psi_a. \end{equation}

Eqns.~\eqref{eq:new-UB-1}--\eqref{eq:new-UB-2} conclude that the set of $M$ $p$-multisets produced by $S$, say $\bigcup_{t=0}^{M-1}S_p(t)$, can not include all $p$-multisets in $\mathcal{B}_a$, for $a=1,2,\ldots,k$.
Therefore, $\bigcup_{t=0}^{M-1}S_p(t)$ must exclude at least $k/p$ multisets, and thus the result follows.
\end{proof}

%The lower bounds in Corollary~\ref{corol:Mcycle-lower} can be improved by an approach based on Eulerian circuits.
%In what follows, we will consider the missing cases for $m=2$ and $3$ in Corollary~\ref{corol:Mcycle}.
In what follows, we will construct cyclic $m$-distinguishable sequences on $[k]$ with $m=2,3$ and $m|k$, where the sequence lengths achieve the upper bounds given in Theorem~\ref{thm:new-upper-bound}.

In graph theory, a \textit{1-factor} of a graph is a spanning 1-regular subgraph, i.e., a collection of $k/2$ independent edges, where $n$ is the number of vertices.

\begin{theorem}\label{thm:m=2}
For any even $k\geq 2$, there exists a cyclic 2-distinguishable sequence on $[k]$ of length $\binom{k+1}{2}-\frac{k}{2}$.
\end{theorem}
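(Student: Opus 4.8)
The plan is to convert the existence statement into a problem about Eulerian circuits, exactly paralleling the $(2,k)$-Mcycle correspondence recalled in Section~\ref{section:lower-bounds}. Let $G_k^{\circ}$ be the multigraph obtained from the complete graph $G_k$ on vertex set $[k]$ by attaching one self-loop at each vertex. A $2$-multiset $\{a,b\}$ with $a\neq b$ corresponds to the edge $ab$ of $G_k$, while a repeated multiset $\{a,a\}$ corresponds to the loop at $a$. Reading a cyclic sequence $s_0 s_1 \cdots s_{M-1}$ over $[k]$ as the closed walk $s_0,s_1,\ldots,s_{M-1},s_0$ in $G_k^{\circ}$, the consecutive multisets $S_2(t)=\{s_t,s_{t+1}\}$ are precisely the edges and loops traversed. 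Hence $S$ is cyclic $2$-distinguishable if and only if this closed walk is a closed trail (every edge used at most once), and its length $M$ equals the number of edges in that trail. The theorem therefore amounts to exhibiting a closed trail of $G_k^{\circ}$ that uses $\binom{k+1}{2}-\tfrac{k}{2}$ of its $\binom{k+1}{2}$ edges.

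The heart of the argument is a parity count. In $G_k^{\circ}$ every vertex has degree $(k-1)+2=k+1$, which is odd because $k$ is even, so $G_k^{\circ}$ is not Eulerian. First I would repair all $k$ parity defects at once by deleting a $1$-factor $F$ of $G_k$, which exists for even $k$ (for instance $\{1,2\},\{3,4\},\ldots,\{k-1,k\}$) and consists of $k/2$ edges. Removing $F$ drops each vertex degree by exactly $1$, leaving every vertex of $G_k^{\circ}-F$ with even degree $k$, while all $k$ loops are untouched; the remaining edge count is $\binom{k}{2}-\tfrac{k}{2}+k=\binom{k+1}{2}-\tfrac{k}{2}$. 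This deletion is also minimum possible, since any edge set whose removal corrects all $k$ odd parities must have odd degree at every vertex and hence contains at least $k/2$ edges; that is exactly why the resulting length meets the upper bound of Theorem~\ref{thm:new-upper-bound}.

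It then remains to verify connectivity so that the even subgraph supports a single Eulerian circuit. For $k\geq 4$, the graph $G_k$ minus a perfect matching stays connected (each vertex retains degree $k-2\geq 2$ to the others), and adjoining the loops cannot disconnect it, so $G_k^{\circ}-F$ is connected with all degrees even; by the standard Eulerian circuit theorem (\cite[Theorem 1.2.26]{West}) it admits an Eulerian circuit, whose vertex list gives the desired cyclic $2$-distinguishable sequence of length $\binom{k+1}{2}-\tfrac{k}{2}$. I expect the only delicate points to be the bookkeeping of the correspondence (counting each loop correctly both in the degree sum and in the sequence length) and the connectivity check for small grids: the construction is clean for $k\geq 4$, whereas $k=2$ is degenerate and should be treated directly as a base case rather than through the Eulerian machinery.
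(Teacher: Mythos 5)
Your proposal is correct and is essentially the paper's own argument: both fix the parity obstruction by deleting a $1$-factor $F$ of $G_k$ (each vertex having odd degree once loops are counted) and then read the cyclic $2$-distinguishable sequence off an Eulerian circuit, the length count $\binom{k}{2}-\frac{k}{2}+k=\binom{k+1}{2}-\frac{k}{2}$ being identical. The only cosmetic difference is that you attach the $k$ self-loops to the graph up front and traverse them inside the Eulerian circuit of $G_k^{\circ}-F$, whereas the paper finds the Eulerian circuit in the simple graph $G_k-F$ and inserts the multisets $\{a,a\}$ afterward by repeating the first occurrence of each element -- two descriptions of the same construction.
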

\begin{proof}
Let $F$ be a $1$-factor in $K_k$.
Then, $K_k-F$ is an even graph, i.e., each vertex's degree is even.
A $2$-distinguishable sequence can be obtained by the following three steps.
\begin{enumerate}[(i)]
    \item Find an Eulerian circuit in $K_k-F$.
    \item Repeat the first occurrence of every element and denote by $S$ the obtained cyclic sequence.
    \item Pick an arbitrary $t$-cut of $S$.
\end{enumerate}

Take Fig.~\ref{fig:Eulerian}\,(b) as an example, where the Eulerian circuit is set to $1,3,2,4,5,1,6,2,5,3,6,4$.
Following the three steps, we have
\begin{align*}
1,3,2,4,5,1,6,2,5,3,6,4 &\stackrel{\text{(ii)}}{\longrightarrow} 1,1,3,3,2,2,4,4,5,5,1,6,6,2,5,3,6,4 \\ &\stackrel{\text{(iii)}}{\longrightarrow} 1,1,3,3,2,2,4,4,5,5,1,6,6,2,5,3,6,4,1,
\end{align*}
where the last step is done by taking the $0$-cut.

It is not hard to see that the resulting sequence $S$ following these three steps is $2$-distinguishable, which is of length $\binom{k}{2}-\frac{k}{2}+k+1$, as desired.
\end{proof}

\begin{figure}%[h]
\centering
\includegraphics[width=0.45\columnwidth]{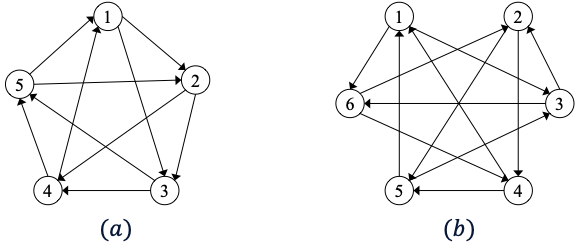}
    % \vspace{0.2cm}
    \caption{(a) An Eulerian circuit of $K_5$. The vertices that it travels is $1,2,3,4,5,1,3,5,2,4$ in order. (b) An Eulerian circuit of $K_6-F$, where $F$ contains the three edges $\{1,2\}, \{3,4\}$ and $\{5,6\}$. The vertices that it travels is $132451625364$ in order.}  \label{fig:Eulerian} 
    % \vspace{-0.2cm}
\end{figure} 

In the case of $m=3$ and $k\equiv 0$ (mod $3$), we shall provide a recursive construction based on the method given in~\cite[Theorem 8]{HJZ09}.
The main idea is to construct a cyclic $3$-distinguishable sequence as longer as possible, and then apply the $t$-cut action.
The following example shows two $3$-distinguishable sequences, for $k=3$ and $k=6$, which will stand for the initial cases for the recursive construction.

\begin{example}\label{ex:Mpacking_36} \rm
The sequence
\begin{equation}\label{eq:Mpacking_3}
    1,1,1,2,2,2,3,3,3
\end{equation}
is a cyclic $3$-distinguishable sequence which contains all $3$-multisets of $[3]$ except the one $\{1,2,3\}$.
And the sequence
\begin{equation} \label{eq:Mpacking_6}
    \begin{split}
        &1,1,1,2,2, 2,3,3,3,1, \ 1,6,6,3,1, 5,5,2,2,4, \ 5,3,5,3,2, 4,4,3,3,6, \\
        &2,1,4,1,4, 6,2,6,2,5, \ 1,4,3,6,5, 5,5,4,4,4, \ 6,6,6,5
    \end{split}
\end{equation}
is a cyclic $3$-distinguishable sequence which contains all $3$-multisets of $[6]$ except the two $\{1,2,3\}, \{4,5,6\}$.
Therefore, one has $M^c_3(3)\geq 9$ and $M^c_3(6)\geq 54$. 
%, and hence $M_3(3)\geq 11$ and $M_3(6)\geq 56$.
\end{example}

\begin{theorem}\label{thm:m=3}
For $k$ a multiple of $3$, there exists a cyclic $3$-distinguishable sequence on $[k]$ of length $\binom{k+2}{3}-\frac{k}{3}$.
%Therefore, 
%\begin{align*}
%M_3(k)\geq \binom{k+2}{3}-\frac{k}{3}+2.
%\end{align*}
\end{theorem}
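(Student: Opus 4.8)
The plan is to exhibit a cyclic $3$-distinguishable sequence on $[k]$ that realizes every $3$-multiset except a prescribed family of $\frac{k}{3}$ of them. Since Theorem~\ref{thm:new-upper-bound} already gives $M^c_3(k)\le\binom{k+2}{3}-\frac{k}{3}$ when $3\mid k$, any such sequence is automatically of maximum length and the claimed value follows. Concretely, I would partition $[k]$ into the $\frac{k}{3}$ consecutive triples $T_j=\{3j-2,3j-1,3j\}$ and aim to cover all $3$-multisets except the $\frac{k}{3}$ ``rainbow'' triples $\{3j-2,3j-1,3j\}$ themselves. This is exactly what happens in Example~\ref{ex:Mpacking_36}, where $\{1,2,3\}$ is the only omission for $k=3$ and $\{1,2,3\},\{4,5,6\}$ for $k=6$. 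With this choice the number of covered multisets is $\binom{k+2}{3}-\frac{k}{3}$ by design, so the only real work is to thread these multisets into a single cyclic sequence in which all length-$3$ windows are distinct.

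The key reformulation is the transition-graph picture used for $m=2$ in Theorem~\ref{thm:m=2}, lifted one level. I would take the vertex set to be the $2$-multisets of $[k]$ and regard each admissible $3$-multiset as a directed edge: writing a window as $s_t s_{t+1} s_{t+2}$, the choice of the middle symbol $s_{t+1}$ orients the multiset as an arc from the $2$-multiset $\{s_t,s_{t+1}\}$ to the $2$-multiset $\{s_{t+1},s_{t+2}\}$. A cyclic $3$-distinguishable sequence covering precisely the chosen multisets then corresponds to an Eulerian circuit in this directed multigraph. Hence the problem becomes: orient the admissible $3$-multisets so that the resulting digraph is connected and balanced, i.e.\ the in-degree equals the out-degree at every vertex.

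For the existence of such an orientation I would argue by induction, following the method of~\cite[Theorem~8]{HJZ09}, using $k=3$ and $k=6$ of Example~\ref{ex:Mpacking_36} as base cases and passing from $[k]$ to $[k+6]$ by adjoining two fresh triples (so that the two base cases seed the two residue classes $k\equiv 0,3 \pmod 6$). Assuming a balanced Eulerian subgraph on the old colors, I would classify the new $3$-multisets according to how many of their entries are new (three, two, or one), orient each class so as to preserve the in/out balance at both the old and the new $2$-multiset vertices, and then splice the resulting closed walks into the existing circuit at shared vertices to keep the digraph connected. The role of $3\mid k$ is to force the relevant incidence counts to be even/balanced, which is precisely what permits a consistent global orientation.

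The main obstacle I anticipate is this simultaneous control of balance and connectivity in the inductive step: orienting one new $3$-multiset alters the imbalance at two vertices at once, so the omitted rainbow triples must be chosen so that, after all arcs are oriented, no vertex is left unbalanced and no component is severed from the rest. This is exactly where the step size $+6$ (two triples, preserving the required parity) and the careful matching of new arcs to insertion vertices—rather than a naive $+3$ step—do the real work; verifying these balance identities and confirming that a \emph{single} Eulerian circuit results is the crux. Once the circuit is produced, reading it off yields the sequence, whose length equals the number of arcs, namely $\binom{k+2}{3}-\frac{k}{3}$, completing the argument.
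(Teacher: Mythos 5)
Your reduction to an Eulerian circuit is where the argument breaks. The correspondence you assert---``a cyclic $3$-distinguishable sequence covering precisely the chosen multisets corresponds to an Eulerian circuit in this directed multigraph''---holds only in one direction. A sequence does induce an Eulerian circuit on the $2$-multiset vertices, but an Eulerian circuit does not in general induce a sequence, because the vertices are \emph{unordered} pairs. Concretely, suppose the circuit enters the vertex $\{y,z\}$ along an arc whose window reads $(x,y,z)$, so that $z$ is the newest symbol of the partial sequence $\ldots x,y,z$. The next arc of the circuit merely has to \emph{start} at $\{y,z\}$; it may be an arc whose window reads $(z,y,w)$ (middle symbol $y$), and such an arc cannot be appended after $\ldots x,y,z$, since it would require $y$ to follow $z$ in the sequence. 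A valid continuation needs the outgoing window to have the form $(y,z,w)$, i.e.\ its middle symbol must equal the incoming window's newest symbol---a chaining constraint that balance plus connectivity does not enforce. This is exactly why the Eulerian argument is legitimate for $m=2$ (Theorem~\ref{thm:m=2}), where transition vertices are single symbols and chaining is automatic, but fails for $m\geq 3$; indeed, if balance and connectivity sufficed, the existence question for $(m,k)$-Mcycles (Conjecture~\ref{conj:Mcycle}) would reduce to a routine orientation argument, whereas it remains open in general and the analogous Ucycle statement required the machinery of~\cite{GJKO20}.

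Beyond this, your inductive step is only a plan: you say you would orient the new $3$-multisets class by class to preserve balance and then splice closed walks at shared vertices, but none of the balance identities or splicings is exhibited, and---because of the gap above---even completing them would not yield a sequence. The paper's proof of Theorem~\ref{thm:m=3} avoids the transition-graph picture entirely: it recursively builds the sequence itself as a concatenation $ST$ with explicitly maintained boundary properties (S1), (S2), (T1), (T2), obtains $X$ from $T'$ by relabeling the three largest colors of $[k-3]$ as $k-2,k-1,k$, and inserts explicit gadget strings $Y$ and $Z$ (depending on the parity of $k$) whose windows can be checked directly; the step size is $+3$, not $+6$. Your choice of excluded multisets (the $k/3$ consecutive rainbow triples), your base cases $k=3,6$, and the length count all agree with the paper; what is missing is a mechanism guaranteeing window-level consistency, which is the actual content of the theorem.
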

\begin{proof}
The proof is proceeds by induction on $\frac{k}{3}$.
As shown in Example~\ref{ex:Mpacking_36}, the assertion holds when $k=3$ and $k=6$, so we will consider $k\geq 9$.
In the inductive construction, the obtained cyclic $3$-distinguishable sequence on $[k]$ will be of the form $ST$, where $ST$ is the concatenation of two subsequences $S$ and $T$ satisfying the following properties.
\begin{enumerate}
    \item[(S1)] $S$ is a cyclic $3$-distinguishable sequence on $[k-3]$ and contains all $3$-multisets of $[k-3]$ except the $\frac{k}{3}-1$ ones: $\{1,2,3\},\ldots,\{k-5,k-4,k-3\}$.
    \item[(S2)] $S$ begins with $1,1$ if we consider it is a non-cyclic sequence.
    \item[(T1)] $T$ is a cyclic $3$-distinguishable sequence on $[k]$ and contains all $3$-multisets of $[k]$ with at least one element from $\{k-2,k-1,k\}$ but excludes the one $\{k-2,k-1,k\}$.
    \item[(T2)] $T$ begins with $1,1$ and ends with $k,k-1$ if we consider it is a non-cyclic sequence.
\end{enumerate}
Note that we use $ST$ to denote the concatenation of two subsequences $S$ and $T$, rather than $S,T$, in order to emphasize the concatenation itself.

With the four properties, $ST$ is a cyclic $3$-distinguishable sequence on $[k]$ containing all $3$-multisets of $[k]$ except the $\frac{k}{3}$ ones: $\{1,2,3\},\ldots,\{k-5,k-4,k-3\},\{k-2,k-1,k\}$, and thus is of length $\binom{k+2}{3}-\frac{k}{3}$.
Notice that the cyclic $3$-distinguishable sequence on $[6]$ shown in Eqn.~\eqref{eq:Mpacking_6} satisfies the four properties by letting $S=1,1,1,2,2,2,3,3,3$ and $T$ the remaining subsequence.

Assume as induction hypothesis that there exists a cyclic $3$-distinguishable sequence on $[k-3]$ of the form $S'T'$, where the subsequence $S'$ satisfies the two conditions S1 and S2, and $T'$ satisfies the two conditions T1 and T2.
That is, $S'$ is a cyclic $3$-distinguishable sequence on $[k-6]$ and contains all $3$-multisets of $[k-6]$ except the $\frac{k}{3}-2$ ones: $\{1,2,3\},\ldots,\{k-8,k-7,k-6\}$, and $T'$ is a cyclic $3$-distinguishable sequence on $[k-3]$ and contains all $3$-multisets of $[k-3]$ with at least one element from $\{k-5,k-4,k-3\}$ but excludes the one $\{k-5,k-4,k-3\}$.
By viewing $S'$ and $T'$ as non-cyclic subsequences, $S'$ begins with $1,1$ and $T'$ also begins with $1,1$ and ends with $k-3,k-4$.

Now, we shall construct a cyclic $3$-distinguishable sequence on $[k]$ of the form $ST$, where the subsequence $T$ will be of the form $T=XYZ$, the concatenation of three subsequences $X,Y$ and $Z$.

First, let $S=S'T'$.
Obviously, $S$ satisfies conditions S1 and S2.

Next, let $X$ be a sequence obtained from $T'$ by replacing each $k-5$ by $k-2$, $k-4$ by $k-1$, and $k-3$ by $k$.
Notice that $X$ begins with $1,1$ and ends with $k,k-1$ due to the structure of $T'$.
Moreover, by viewing $X$ as a non-cyclic sequence, it contains all $3$-multisets of $[k-6]\cup\{k-2,k-1,k\}$ with at least one element from $\{k-5,k-4,k-3\}$ but excludes the three multisets $\{k-2,k-1,k\}$, $\{1,k-1,k\}$ and $\{1,1,k-1\}$.

Finally, the constructions of both $Y$ and $Z$ {are divided into two cases} according to the parity of $k$. 
For notational convenience, we use symbols $a,b,c,d,e,f$ to denote elements $k-5,k-4,k-3,k-2,k-1,k$, respectively.
When $k$ is even, let 
\begin{align*}
Y=\text{a, a, f, f, c, a, e, e, b, b, \ d, e, c, e, c, b, d, d, c, c, \ f, b, a, d, a, d, f, b, f},
\end{align*}
and 
\begin{align*}
Z=\ & \text{b, e, } k-6, \text{ a, f, } k-7, \text{ b, e, } k-8, \text{ a, f, } k-9, \ldots, \text{a, f, } 1, \text{b, e}, \\
    & \text{a, d, } k-6, \text{ c, e, } k-7, \text{ a, d, } k-8, \text{ c, e, } k-9, \ldots, \text{c, e, } 1, \text{a, d}, \\
    & \text{c, f, } k-6, \text{ b, d, } k-7, \text{ c, f, } k-8, \text{ b, d, } k-9, \ldots, \text{b, d,  } 1, \text{c, f, e}.
\end{align*}
One can check that the non-cyclic sequence $YZ$ contains all $3$-multisets of $\{k-5,k-4,k-3,k-2,k-1,k\}$ except the two $\{k-5,k-4,k-3\},\{k-2,k-1,k\}$ and also contains all $3$-multisets with one element from each of $[k-6]$, $\{k-5,k-4,k-3\}$, and $\{k-2,k-1,k\}$.
Note that the multisets $\{1,k-1,k\}$ and $\{1,1,k-1\}$ are missing in the non-cyclic sequence $X$, but both will appear in the concatenation of $Z$ and $S$.
Therefore, $ST=SXYZ$ is the desired cyclic $3$-distinguishable sequence which satisfies the four properties S1, S2, T1 and T2.

The case when $k$ is odd can be dealt with in a similar way, so we just list the constructions of the corresponding subsequences $Y$ and $Z$, as follows.
\begin{align*}
Y = \ & \text{b, e, b, 1, f, a, b, d, 1, c, \ f, f, a, a, e,  c, b, f, b, f}, \\ 
& \text{d, a, d, a, 1,  e, c, c, f, a, \ e, e, c, d, c, d, b, d},
\end{align*}
and
\begin{align*}
Z=\ & \text{b, e}, k-6, \text{a, f}, k-7, \text{b, e}, k-8, \text{a, f}, k-9, \ldots, \text{a, f}, 2, \text{b, e}, \\
    & \text{a, d}, k-6, \text{c, e}, k-7, \text{a, d}, k-8, \text{c,e}, k-9, \ldots, \text{c, e}, 2, \text{a, d}, \\
    & \text{c, f}, k-6, \text{b, d}, k-7, \text{c, f}, k-8, \text{b, d}, k-9, \ldots, \text{b, d}, 2, \text{c, f, e}.
\end{align*}
\end{proof}

We use the following example to illustrate the construction given in Theorem~\ref{thm:m=3}.

\begin{example}\label{ex:Mpacking_9} \rm
When $k=9$, we pick Eqn.~\eqref{eq:Mpacking_6} as a cyclic $3$-distinguishable sequence on $[6]$, $S=S'T'$, where $S'=1,1,1,2,2,2,3,3,3$ and 
%2-column version
%\begin{align*}
%T'= & 11663 \ 15522 \ 45353 \ 24433 \ 62141 \\
%    & 46262 \ 51436 \ 55544 \ 46665.
%\end{align*}
\begin{align*}
T'= & 1,1,6,6,3, 1,5,5,2,2, \ 4,5,3,5,3, 2,4,4,3,3, \ 6,2,1,4,1, 4,6,2,6,2, \\ 
    & 5,1,4,3,6, 5,5,5,4,4, \ 4,6,6,6,5.
\end{align*}
Then, by replacing each $4,5,6$ in $T'$ by $7,8,9$, respectively, we get
%2-column version
%\begin{align*}
%X = & 11993 \ 18822 \ 78383 \ 27733 \ 92171  \\
%    & 79292 \ 81739 \ 88877 \ 79998. 
%\end{align*}
\begin{align*}
X = & 1,1,9,9,3, 1,8,8,2,2, \ 7,8,3,8,3, 2,7,7,3,3, \ 9,2,1,7,,1, 7,9,2,9,2, \\ 
    & 8,1,7,3,9, 8,8,8,7,7, \ 7,9,9,9,8. 
\end{align*}
Since $k=9$ is odd, we have
\begin{align*}
Y = \ & 5,8,5,1,9, 4,5,7,1,6, \ 9,9,4,4,8, 6,5,9,5,9, \\ 
    & 7,4,7,4,1, 8,6,6,9,4, \ 8,8,6,7,6, 7,5,7,
\end{align*}
and 
\begin{equation*}
Z = 5,8,3,4,9,2,5,8,\ 4,7,3,6,8,2,4,7,\ 6,9,3,5,7,2,6,9,8.
\end{equation*}
The resulting cyclic $3$-distinguishable sequence on $[9]$ is $ST=S'T'XYZ$, which is of length $9+45+45+38+25=162=\binom{11}{3}-3$.
%Hence we get a $3$-distinguishable sequence on $[9]$ of length $164$ by taking any $t$-cut of the obtained cyclic sequence.
\end{example}

We immediately have the following result.

\begin{corollary}\label{corol:Mcycle-m=23}
For any even $k\geq 2$,
\begin{align*}
    M^c_2(k)=\binom{k+1}{2}-\frac{k}{2} \text{ and } M_2(k)\geq\binom{k+1}{2}-\frac{k}{2}+1.
\end{align*}
For any $k$ with $3|k$,
\begin{align*}
    M^c_3(k)=\binom{k+2}{3}-\frac{k}{3} \text{ and } M_3(k)\geq\binom{k+2}{3}-\frac{k}{3}+2.
\end{align*}
\end{corollary}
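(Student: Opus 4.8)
The plan is to assemble this corollary by sandwiching the cyclic values between the upper bound of Theorem~\ref{thm:new-upper-bound} and the constructive lower bounds of Theorems~\ref{thm:m=2} and~\ref{thm:m=3}, and then to peel off the non-cyclic inequalities through the $t$-cut mechanism of Proposition~\ref{prop:t-cut}. The substantive work has already been done, so the corollary is genuinely a matter of matching bounds and checking the additive constants.

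First I would handle the cyclic equality for $m=2$. Applying Theorem~\ref{thm:new-upper-bound} with the prime $p=2$ (legitimate since $2\mid k$ for even $k$) gives the upper bound $M^c_2(k)\leq\binom{k+1}{2}-\frac{k}{2}$. Theorem~\ref{thm:m=2} produces, for every even $k\geq 2$, a cyclic $2$-distinguishable sequence on $[k]$ of length exactly $\binom{k+1}{2}-\frac{k}{2}$, so $M^c_2(k)\geq\binom{k+1}{2}-\frac{k}{2}$. The two bounds coincide and force equality. The case $m=3$ with $3\mid k$ is identical in structure: Theorem~\ref{thm:new-upper-bound} with $p=3$ yields $M^c_3(k)\leq\binom{k+2}{3}-\frac{k}{3}$, and Theorem~\ref{thm:m=3} supplies a matching construction of that length, giving $M^c_3(k)=\binom{k+2}{3}-\frac{k}{3}$.

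The non-cyclic bounds then follow at once. Since a cyclic $m$-distinguishable sequence attaining the stated length exists, Proposition~\ref{prop:t-cut}, in the form $M_m(k)\geq M^c_m(k)+m-1$, delivers $M_2(k)\geq M^c_2(k)+1=\binom{k+1}{2}-\frac{k}{2}+1$ and $M_3(k)\geq M^c_3(k)+2=\binom{k+2}{3}-\frac{k}{3}+2$, which are precisely the claimed inequalities. Here the only thing to verify is that the offset $m-1$ specializes correctly to $+1$ for $m=2$ and $+2$ for $m=3$.

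I do not expect any real obstacle: every ingredient is already in place, and the proof amounts to citing four prior results in the right order and confirming that plugging $p=2,3$ into the upper bound reproduces the lengths of the explicit constructions. The closest thing to a subtlety is purely bookkeeping, namely ensuring the hypotheses align ($k$ even for $p=2$, $3\mid k$ for $p=3$, both of which are exactly the ranges stated in the corollary) so that Theorem~\ref{thm:new-upper-bound} and the construction theorems apply over the same parameter sets.
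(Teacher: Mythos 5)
Your proposal is correct and follows exactly the paper's own route: the paper proves this corollary by citing Theorems~\ref{thm:new-upper-bound}--\ref{thm:m=3} (matching upper bounds and constructions for $p=2,3$) together with Proposition~\ref{prop:t-cut} for the non-cyclic offsets, which is precisely your bound-sandwiching plus $t$-cut argument. Your write-up merely makes the bookkeeping explicit that the paper leaves implicit.
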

\begin{proof}
The proof is directly followed by Theorems~\ref{thm:new-upper-bound}--\ref{thm:m=3} and Proposition~\ref{prop:t-cut}.
\end{proof}

\begin{comment}
\begin{remark}\rm
$M_m(k)$ can be easily determined by the length of an $(m,k)$-Mcycle or a so-called $(m,k)$-Mcycle packing.
However, the existence of an $(m,k)$-Mcycle or the evaluation of the longest length of an $(m,k)$-Mcycle packing is wild open.
\end{remark}
\end{comment}

\vspace{0.2cm}
\begin{remark}\rm
%\noindent\textbf{Remark.} 
The recursive construction in the proof of Theorem~\ref{thm:m=3} can be generalized to $m\geq 4$, by concatenating two sequences which satisfy some specific properties.
More precisely, suppose there exists a cyclic $m$-distinguishable sequence on $[k-m]$ in the form $S'T'$. 
It is desired to obtain a cyclic $m$-distinguishable sequence on $[k]$ in the form $ST$, where $S=S'T'$ and $T=XW$, having the following properties. 
\begin{enumerate}[(1)]
    \item The first $m-1$ elements of $S'$ an $T'$ are identical and all in $[k-2m]$.
    \item The last $m-1$ elements of $T'$ are all in $\{k-2m+1,k-2m+2,\ldots,k-m\}$.
    \item $X$ is obtained from $T'$ by replacing each element $k-2m+t$ by $k-m+t$ for $t=1,2,\ldots,m$.
    \item The last $m-1$ elements of $W$ are identical with those of $X$.
\end{enumerate} 
\end{remark}

\begin{comment}
In the following, we demonstrate how to apply our multiset combinatorial coding protocol in the object tracking problem.
\begin{itemize}
    \item $N = 16$, given $m = 2$, we can use $k=5$ colors $1,2,3,4$ and $5$ to label the $16$ sensors using the sequence in~\eqref{eq:2-dis-N=16}, instead of $16$ colors if using the simple  straightforward protocol. % That is, from 16 to 5.  
    The factor of reduction on the number of bits required for labeling each sensor is given by % $\log_2(16) / \log_2(5) = 1.72$
    $\log_2(k) / \log_2(N) \approx 0.58$. 
    \item $N = 37$, given $m = 3$, we can use $k=5$ colors $1,2,3,4$ and $5$ to label the $37$ sensors using the sequence in~\eqref{eq:3-dis-N=37}, instead of $37$ colors if using the simple straightforward protocol. 
    The factor of reduction on the number of bits required for labeling each sensor is given by % $\log_2(37) / \log_2(5) = 2.24$
    $\log_2(k) / \log_2(N) \approx 0.45$.     
    \item $N = 50$, given $m = 3$, we can use $k=6$ colors $1, 2, 3$, $4,5$ and $6$ to label the $50$ sensors using the sequence  
    \begin{align*}
        & 133322\ 244455\ 511666\ 225533\ 664411 \\
        & 135246\ 125634\ 123456\ 13,
%         133322\ 244455\ 511666\ 225533\ 664411\ 135246\ 125634\ 123456\ 13,
    \end{align*}
    instead of $50$ colors if using the simple straightforward protocol. 
    The factor of reduction on the number of bits required for labeling each sensor is given by % $\log_2(50) / \log_2(6) = 2.18$
    $\log_2(k) / \log_2(N) \approx 0.46$.     
    %Here, we can minimize the number of required IDs, to be 4 by the proposed multiset color coding protocol, instead of using 22 colors if following the simple straightforward protocol. 
\end{itemize}
\end{comment}

%%%%%%%%%%%%%%%%%%%%%%%%%%%%%%%%%%%%%%%%%%%%%%%%
\subsection{A Synthetic Construction}
\label{section:recursive-construction}
%%%%%%%%%%%%%%%%%%%%%%%%%%%%%%%%%%%%%%%%%%%%%%%%

Complexity for constructing (cyclic) $m$-distinguishable sequences increases with the value $m$ and so far we have only discussed explicit construction algorithms for small $m$ only.
In what follows, we shall provide an approach to construct cyclic $m$-distinguishable sequences by splitting $m$.

\medskip

\noindent\textbf{Synthetic Construction.}
Suppose $M_1,M_2,m_1,m_2$ are positive integers such that $m_i$ divides $M_i$ and $\gcd(d,M_i/(m_id))=1$ for $i=1,2$, where $d=\gcd(M_1/m_1,M_2/m_2)\geq 2$.
Let $S=s_0,s_1,\ldots, s_{M_1-1}$ be a cyclic $m_1$-distinguishable sequence on $[k_1]$ and $T=t_0,t_1,\ldots, t_{M_2-1}$ a cyclic $m_2$-distinguishable sequence on $\{k_1+1,k_2+2,\ldots,k_1+k_2\}$, a set of $k_2$ colors.
We evenly divide $S$ into $M_1/m_1$ subsequences, called \emph{$\alpha$-words}, in the form $S=\alpha_0\alpha_1\cdots\alpha_{(M_1/m_1)-1}$, where $\alpha_j=s_{m_1j},s_{m_1j+1},\ldots, s_{m_1j+m_1-1}$ for $j\in\mathbb{Z}_{M_1/m_1}$.
In other words, $S$ can be viewed as the concatenation of subsequences $\alpha_0,\alpha_1,\ldots,\alpha_{(M_1/m_1)-1}$.
Similarly, divide $T$ into \emph{$\beta$-words} in the form $T=\beta_0\beta_1\cdots\beta_{(M_2/m_2)-1}$, where $\beta_j=t_{m_2j},t_{m_2j+1},\ldots, t_{m_2j+m_2-1}$ for $j\in\mathbb{Z}_{M_2/m_2}$.
Let $L\triangleq\lcm(M_1/m_1,M_2/m_2)=M_1M_2/(m_1m_2d)$.
Define the \textit{cross product} of $S$ and $T$, denoted by $S\times T$, as a sequence of length $(m_1+m_2)L=\frac{M_1M_2}{d}(\frac{1}{m_1}+\frac{1}{m_2})$ by
\begin{equation}\label{eq:cross-product-def}
    S\times T=\alpha_0\beta_0\alpha_1\beta_1\cdots\alpha_i\beta_i\cdots\alpha_{L-1}\beta_{L-1},
\end{equation}
where the indices of $\alpha$-words (resp., $\beta$-words) are taken modulo $M_1/m_1$ (resp., $M_2/m_2$).

We must emphasize once more that our notation $\alpha_i\beta_j$ stands for the concatenation of two subsequences $\alpha_i$ and $\beta_j$, rather than a sequence containing only the two elements (symbols) $\alpha_i$ and $\beta_j$.
This clarification is made to avoid any possible misunderstanding.
Note that when both $\alpha_i$ and $\beta_j$ contain only one element, the expressions $\alpha_i\beta_j$ and $\alpha_i, \beta_j$ are equivalent in meaning.

\begin{example}\label{ex:cross-product}\rm
Let $M_1=12,M_2=30,m_1=2,m_2=3$.
We have $M_1/m_1=6, M_2/m_2=10$, and thus $d=\gcd(M_1/m_1,M_2/m_2)=2\geq 2$, $\gcd(d,M_i/(m_id))=1$ for $i=1,2$.
Consider $k_1=k_2=5$.
We pick the cyclic $2$-distinguishable sequence of length $12$ as 
\begin{align*}
S=\underbrace{1,1,}_{\alpha_0}\ \underbrace{3,3,}_{\alpha_1}\ \underbrace{5,2,}_{\alpha_2}\ \underbrace{4,1,}_{\alpha_3}\ \underbrace{2,3,}_{\alpha_4}\ \underbrace{4,5,}_{\alpha_5}
\end{align*}
and the cyclic $3$-distinguishable sequence of length $30$ as
\begin{align*}
T=\underbrace{6,6,6,}_{\beta_0}\ \underbrace{7,7,7,}_{\beta_1}\ \underbrace{8,8,8,}_{\beta_2}\ \underbrace{9,9,9,}_{\beta_3}\ \underbrace{10,10,10,}_{\beta_4}\ \underbrace{6,6,8,}_{\beta_5}\ \underbrace{8,10,10,}_{\beta_6}\ \underbrace{7,9,6,}_{\beta_7}\ \underbrace{8,10,7,}_{\beta_8}\ \underbrace{7,9,9}_{\beta_9}.
\end{align*}
%\sout{Here, we replace the element $10$ by $0$ in $T$ for the sake of notational convenience.}
The cross product of $S$ and $T$ is
\begin{align*}
S\times T = & \ \alpha_0\beta_0\alpha_1\beta_1\alpha_2\beta_2\alpha_3\beta_3\alpha_4\beta_4\alpha_5\beta_5 \\
& \ \alpha_0\beta_6\alpha_1\beta_7\alpha_2\beta_8\alpha_3\beta_9\alpha_4\beta_0\alpha_5\beta_1 \\
& \ \alpha_0\beta_2\alpha_1\beta_3\alpha_2\beta_4\alpha_3\beta_5\alpha_4\beta_6\alpha_5\beta_7 \\
& \ \alpha_0\beta_8\alpha_1\beta_9\alpha_2\beta_0\alpha_3\beta_1\alpha_4\beta_2\alpha_5\beta_3 \\
& \ \alpha_0\beta_4\alpha_1\beta_5\alpha_2\beta_6\alpha_3\beta_7\alpha_4\beta_8\alpha_5\beta_9.
\end{align*}
\end{example}

\begin{theorem}\label{thm:cross-product}
The sequence $S\times T$ constructed in Synthetic Construction is a cyclic $(m_1+m_2)$-distinguishable sequence with $k_1+k_2$ colors of length $\frac{M_1M_2}{d}(\frac{1}{m_1}+\frac{1}{m_2})$.
\end{theorem}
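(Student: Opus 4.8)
The plan is to prove the only nontrivial assertion, namely that $S\times T$ is cyclic $(m_1+m_2)$-distinguishable; the length was already fixed when $S\times T$ was defined, and the color count is $k_1+k_2$ precisely because the two alphabets $[k_1]$ and $\{k_1+1,\dots,k_1+k_2\}$ are disjoint. Throughout I would write $a_1=M_1/m_1$, $a_2=M_2/m_2$, so that $d=\gcd(a_1,a_2)$, $L=\lcm(a_1,a_2)=a_1a_2/d$, and $N=(m_1+m_2)L$ is the length; note that $d\ge 2$ forces $a_1,a_2\ge 2$. The conceptually cleanest first step is to exploit the disjointness of the alphabets: for any start $t\in\mathbb{Z}_N$, the length-$(m_1+m_2)$ window multiset $W(t)$ splits as a disjoint union $A(t)\sqcup B(t)$, where $A(t)$ gathers the letters in $[k_1]$ (the $\alpha$/$S$-letters) and $B(t)$ the remaining ($\beta$/$T$) letters. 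Because $S\times T=\alpha_0\beta_0\alpha_1\beta_1\cdots$, the label pattern ``$m_1$ $S$-letters then $m_2$ $T$-letters'' is periodic with period exactly $m_1+m_2$; since the window length equals this period, every $W(t)$ contains exactly $m_1$ $S$-letters and $m_2$ $T$-letters, and $W(t)=W(t')$ iff $A(t)=A(t')$ and $B(t)=B(t')$.

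Next I would identify $A(t)$ and $B(t)$ with honest windows of $S$ and $T$. Reading the $\alpha$-words in their order of occurrence reproduces $S$ repeated $L/a_1$ times, and the $S$-letters of $W(t)$ are consecutive in this repetition (the interleaved $\beta$-letters are simply skipped), so they form a cyclic window $A(t)=S_{m_1}(\sigma(t))$ for a unique $\sigma(t)\in\mathbb{Z}_{M_1}$; symmetrically $B(t)=T_{m_2}(\tau(t))$ for a unique $\tau(t)\in\mathbb{Z}_{M_2}$. Invoking the cyclic $m_1$-distinguishability of $S$ and the cyclic $m_2$-distinguishability of $T$, the window-equality test becomes purely arithmetic: $W(t)=W(t')$ iff $\sigma(t)=\sigma(t')$ and $\tau(t)=\tau(t')$, so it remains to show $t\mapsto(\sigma(t),\tau(t))$ is injective on $\mathbb{Z}_N$. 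To get formulas I would write $t=(m_1+m_2)q+r$ with $q\in\mathbb{Z}_L$, $0\le r<m_1+m_2$, and split on whether the window starts inside an $\alpha$-word ($0\le r<m_1$) or a $\beta$-word ($m_1\le r<m_1+m_2$). In the first case a full $\beta$-word is captured while the $S$-letters straddle two consecutive $\alpha$-words, giving $\sigma(t)\equiv m_1q+r\pmod{M_1}$ and $\tau(t)\equiv m_2q\pmod{M_2}$; in the second case the roles reverse, giving $\sigma(t)\equiv m_1(q+1)\pmod{M_1}$ and $\tau(t)\equiv m_2q+(r-m_1)\pmod{M_2}$.

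The decisive step is injectivity, and this is exactly where $d\ge 2$ is used. Reducing modulo $m_1$ and $m_2$ recovers the intra-block offset: in the $\alpha$-case $\sigma(t)\bmod m_1=r$ and $\tau(t)\bmod m_2=0$, whereas in the $\beta$-case $\sigma(t)\bmod m_1=0$ and $\tau(t)\bmod m_2=r-m_1$. If $t,t'$ fall in the same case, these force $r=r'$, and then $\sigma(t)=\sigma(t')$, $\tau(t)=\tau(t')$ give $q\equiv q'\pmod{a_1}$ and $q\equiv q'\pmod{a_2}$, hence $q\equiv q'\pmod{L}$ and $q=q'$. If $t,t'$ fall in different cases, matching the ``$\bmod\,m_1$'' and ``$\bmod\,m_2$'' parts forces $r=0$ and $r'=m_1$, after which $\sigma(t)=\sigma(t')$ and $\tau(t)=\tau(t')$ demand an integer $\delta=q-q'$ with $\delta\equiv 1\pmod{a_1}$ and $\delta\equiv 0\pmod{a_2}$; since $d$ divides both $a_1$ and $a_2$ this would yield $1\equiv 0\pmod d$, impossible for $d\ge 2$. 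Hence no cross-case collision occurs, $\Theta$ is injective, and the cyclic $(m_1+m_2)$-distinguishability follows.

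I expect the most error-prone part to be the seam bookkeeping in the second paragraph: pinning down the exact starting indices $\sigma(t),\tau(t)$, including the cyclic wrap-around when $q\bmod a_i=a_i-1$, and verifying that the straddling $S$-letters really are a single contiguous window of $S$. The genuinely essential idea, by contrast, is the short number-theoretic observation that $d\ge 2$ obstructs the simultaneous congruences $\delta\equiv 1\pmod{a_1}$, $\delta\equiv 0\pmod{a_2}$ that any cross-case collision would require. I would also remark that the distinguishability argument above uses only $d\ge 2$; the sharper hypotheses $\gcd(d,M_i/m_id)=1$ seem to enter the accompanying length/optimality (asymptotic) analysis rather than the distinguishability proof itself, and I would flag this when writing the final version.
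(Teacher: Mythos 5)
Your proof is correct, and its core mechanism coincides with the paper's: split each window by the disjoint alphabets, use the cyclic distinguishability of $S$ and $T$ to convert multiset equality into equality of window positions in $S$ and $T$, and rule out the residual collision via the observation that word-indices differing by $1$ cannot be congruent modulo $d\geq 2$. The execution, however, differs in a way worth recording. The paper argues by contradiction: two distinct windows with equal multisets must contain a common full word, say $\alpha_i$; writing $X=A\alpha_iB$ and $Y=C\alpha_iD$, it asserts that the $\beta$-words following the two occurrences of $\alpha_i$ have indices $j\neq j'$ with $j\equiv j'\pmod{d}$, and closes with a case analysis on whether $B$ and $D$ are empty. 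Your coordinatization ($t\mapsto(\sigma(t),\tau(t))$ with a direct injectivity proof) buys two things over this. First, it automatically covers the corner case in which $X$ and $Y$ straddle the \emph{same} occurrence of $\alpha_i$: there the paper's assertion $j\neq j'$ is false (the following $\beta$-word is the same), and the contradiction must instead come from the phase shift --- exactly what your ``same case'' branch ($r=r'$, hence $q=q'$) supplies; this is a small, patchable gap in the paper's write-up that your argument avoids. Second, your closing remark is accurate: the paper's proof, like yours, uses only $d\geq 2$ and never invokes $\gcd(d,M_i/m_id)=1$; that hypothesis is only verified when the construction is instantiated in Theorem~\ref{thm:asymptotic-cyclic}, so flagging it as unnecessary for distinguishability is a genuine, if minor, sharpening of the statement.
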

\begin{proof}
The indices of neighboring $\alpha$- and $\beta$-words in Eqn.~\eqref{eq:cross-product-def} can be identified as a set of ordered pairs 
\begin{align*}
\mathcal{O}\triangleq\{(x \bmod{M_1/m_1}, x \bmod{M_2/m_2}):\,0\leq x\leq L-1\}.
\end{align*}
Recall that $d=\gcd(M_1/m_1,M_2/m_2)$ and $L=\lcm(M_1/m_1,M_2/m_2)$.
By the Chinese Remainder Theorem, the system of congruences 
\begin{align*}
    \begin{cases}
        x\equiv i \bmod{M_1/m_1} \\
        x\equiv j \bmod{M_2/m_2}
    \end{cases}
\end{align*}
has a solution in $\mathbb{Z}_L$ if and only if $i\equiv j \bmod{d}$, and the solution is unique.
Therefore,
\begin{equation}\label{eq:cross-product_ordered-index}
\mathcal{O}=\{(i,j):\,i\in\mathbb{Z}_{M_1/m_1},j\in\mathbb{Z}_{M_2/m_2}, \text{ and } i\equiv j \bmod{d}\},
\end{equation}
where the size of the right-hand-side is exactly $L$.

By the structure of $S\times T$ as shown in Eqn.~\eqref{eq:cross-product-def}, each subsequence of length $(m_1+m_2)$ in $S\times T$ consist of consecutive $m_1$ elements in $S$ and consecutive $m_2$ elements in $T$, and at least one of the two segments is an $\alpha$- or $\beta$-word.
Suppose to the contrary that $S\times T$ is not cyclic $(m_1+m_2)$-distinguishable.
Let $X$ and $Y$ be two distinct subsequences of length $(m_1+m_2)$ in $S\times T$ with identical multisets.
Since the color sets in $S$ and $T$ are disjoint, the two subsequences $X,Y$ must have the same $\alpha$- or $\beta$-word.
Without loss of generality, assume that $\alpha_i$ is the common subsequence in $X$ and $Y$ for some $i\in\mathbb{Z}_{M_1/m_1}$.
Let 
\begin{align*}
X=A\alpha_iB \text{ and } Y=C\alpha_iD,
\end{align*}
where $AB$ and $CD$ are some two subsequences of length $m_2$ in $T$.
%Note that $A,B,C$ or $D$ may be empty.
Since $T$ is cyclic $m_2$-distinguishable, to get a contradiction, it suffices to show that $AB\neq CD$.
We assume that the $\beta$-word followed by $\alpha_i$ in $X$ (resp., $Y$) is $\beta_j$ (resp., $\beta_{j'}$), where $j,j'\in\mathbb{Z}_{M_2/m_2}$.
Note that $B$ is a part of $\beta_j$ and $D$ is a part of $\beta_{j'}$.
By the characterization of the indices of $\alpha$ and $\beta$ words in Eqn.~\eqref{eq:cross-product_ordered-index}, we have
\begin{align*}
    j\equiv i\equiv j' \bmod{d} \text{ and } j\neq j'.
\end{align*}
If $B$ and $D$ are both non-empty, $AB\neq CD$ due to $j\neq j'$.
If both $B$ and $D$ are empty, then $AB=\beta_{j-1}\neq\beta_{j'-1}=CD$ because of $j\neq j'$.
Finally, consider the case when one of $B$ and $D$ is empty.
By symmetry, assume $X=\alpha_i\beta_{j}$ and $Y=\beta_{j'-1}\alpha_i$.
Since $j\neq j'$ and $j\equiv j' \bmod{d}$ with $d\geq 2$, we have $j\neq j'-1$, which concludes that $AB=\beta_j\neq\beta_{j'-1}=CD$.
This completes the proof.
\end{proof}

In previous subsections, we have shown some concrete constructions of cyclic $2$- and $3$-distinguishable sequences for any color number $k$.
Synthetic Construction provides a recursive method to construct a cyclic $m$-distinguishable sequence for any $m$.
For example, we can build a cyclic $10$-distinguishable sequence from two cyclic $5$-distinguishable ones, each of which can be produced by taking the cross product of one cyclic $2$-distinguishable and one cyclic $3$-distinguishable sequence, as shown in Example~\ref{ex:cross-product}.
Theoretically, for any $m$, we can have a cyclic $m$-distinguishable sequence.
Moreover, we have the following tight asymptotic bound of $M^c_m(k)$.

\begin{theorem}\label{thm:asymptotic-cyclic}
    For any $m$ and $k$, by viewing $M^c_m(k)$ as a function of $k$, one has 
    \begin{align*}
        M^c_m(k) = \Theta(k^m).
    \end{align*}
\end{theorem}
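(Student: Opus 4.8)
The upper bound is immediate: by Proposition~\ref{prop:upper}, $M^c_m(k)\le\binom{k+m-1}{m}$, and for fixed $m$ this binomial coefficient is a degree-$m$ polynomial in $k$, so $M^c_m(k)=O(k^m)$. The whole content is therefore the matching lower bound $M^c_m(k)=\Omega(k^m)$, which I would establish constructively for every sufficiently large $k$ by iterating the Synthetic Construction of Theorem~\ref{thm:cross-product}.

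The plan is to split the window size additively. Since every integer $m\ge 2$ can be written as a sum $m=m_1+m_2+\cdots+m_r$ with each $m_i\in\{2,3\}$ and with a fixed number of parts $r\le m/2$, and since the case $m=1$ is trivial ($M^c_1(k)=k$ from the $(1,k)$-Mcycle), it suffices to build a long cyclic $m$-distinguishable sequence by cross-multiplying $r$ base sequences. For the base sequences I would use the constructions already in hand: by Corollary~\ref{corol:Mcycle-m=23} together with the existence of $(2,k)$- and $(3,k)$-Mcycles for the remaining residues of $k$ (see Corollary~\ref{corol:Mcycle}), one has $M^c_2(k)=\Theta(k^2)$ and $M^c_3(k)=\Theta(k^3)$ for all large $k$. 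Partitioning the palette $[k]$ into $r$ disjoint color sets of size $\Theta(k)$ each (legitimate since $r$ is fixed), I obtain, for each part $m_i$, a cyclic $m_i$-distinguishable sequence $S_i$ of length $M_i=\Theta(k^{m_i})$ on its own palette.

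Now set $a_i=M_i/m_i$ and iterate: define $P_1=S_1$ and $P_{j}=P_{j-1}\times S_j$. By Theorem~\ref{thm:cross-product}, $P_{j}$ is a cyclic $(m_1+\cdots+m_j)$-distinguishable sequence, and tracing the length formula shows that crossing two factors with parameters $a,a'$ yields a sequence whose own parameter is exactly $\lcm(a,a')=aa'/d$ with $d=\gcd(a,a')$; hence the new length equals the old one times a factor $\Theta(a'/d)$. Provided $d$ stays bounded by an absolute constant at every one of the (fixed number of) steps, the order of growth is preserved, and after $r$ steps I reach length $\Theta(k^{m_1+\cdots+m_r})=\Theta(k^{m})$, completing the lower bound. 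Combined with the upper bound, this gives $M^c_m(k)=\Theta(k^m)$.

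The main obstacle is precisely the hypothesis of Theorem~\ref{thm:cross-product}: each cross product requires $d=\gcd(a,a')\ge 2$ \emph{and} $\gcd(d,a/d)=\gcd(d,a'/d)=1$, so $d$ can neither equal $1$ nor be large (a large $d$ would collapse the lcm and destroy the $k^m$ order). I would control this by exploiting the freedom in the palette sizes of the base blocks, maintaining the invariant that every parameter $a$ equals twice an odd number with the odd parts of successive factors coprime; this forces $d=2$ at each step and makes $\gcd(2,a/2)=1$ hold automatically. Achieving the invariant reduces to choosing, at each step, a base block $S_j$ whose length lands in a prescribed residue class modulo $4$ and whose odd part avoids the finitely many prime factors of the already-fixed accumulated parameter. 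Since varying the palette size of $S_j$ over an interval of length $\Theta(k)$ yields many admissible lengths, a simple sieve/counting argument should produce a valid choice. Verifying that the base constructions genuinely offer enough distinct admissible lengths is the delicate point I would need to check carefully.
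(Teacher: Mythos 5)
Your overall strategy is the same as the paper's: prove the upper bound from Proposition~\ref{prop:upper}, and prove the lower bound by combining cyclic $2$- and $3$-distinguishable sequences on disjoint palettes via the Synthetic Construction, arranging $d=2$ at every application of Theorem~\ref{thm:cross-product}. (The paper organizes this as an induction on $m$ with a single split $m=m_1+m_2$, $m_1,m_2\ge 2$, rather than your flat decomposition into parts of size $2$ and $3$; that difference is cosmetic.) The step where real work is required is securing the arithmetic hypotheses $d\ge 2$ and $\gcd(d,a/d)=\gcd(d,a'/d)=1$ at each cross product, and this is precisely the step you defer to ``a simple sieve/counting argument'' that you admit you have not checked. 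So, as written, your proposal has a genuine gap at the crux.

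Moreover, the deferred step is not simple, for two concrete reasons. First, the lengths available to you are rigid polynomial functions of the palette size and can fail your invariant identically: for $m_i=2$ with an even palette of size $k'$, Theorem~\ref{thm:m=2} gives length $k'^2/2$, so $a=M/2=(k'/2)^2$ is a perfect square, hence never $\equiv 2 \pmod{4}$, i.e.\ never twice an odd number; you would be forced into a construction-by-construction analysis of which palette residues are admissible (e.g.\ for odd palettes one needs $k'\equiv 7 \pmod{16}$ just to make $a=k'(k'+1)/4$ twice an odd number). Second, the coprimality half of your invariant cannot be obtained by a naive union bound: the accumulated parameter $A$ is a product of earlier parameters and may have many small prime divisors, and the ``bad'' proportion $2\sum_{q\mid A}1/q$ already exceeds $1$ once $3\cdot 5\cdot 7\mid A$, so the counting argument gives nothing without genuine sieve input (e.g.\ additionally forcing every base parameter to avoid small primes). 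The paper sidesteps all of this with one clean device you are missing: by the Bertrand-Chebyshev theorem it chooses distinct primes $p_1,p_2$ in prescribed intervals and takes the two factor sequences to have lengths exactly $M_i=2m_ip_i$ (below the inductively guaranteed maximum), so that $a_i=2p_i$, $d=\gcd(2p_1,2p_2)=2$, and $\gcd(d,a_i/d)=\gcd(2,p_i)=1$ hold automatically. One caveat in the paper's favorless direction: that step tacitly assumes a cyclic $m_i$-distinguishable sequence of any prescribed length below $M^c_{m_i}(k/2)$ can be realized, an assumption your palette-varying approach avoids (your base sequences are explicitly constructed at their natural lengths); but this advantage does not rescue your sketch, since the arithmetic control it was meant to provide is exactly what remains unproven.
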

\begin{proof}
Eqn.~\eqref{eq:upper-cyclic} implies that $M^c_m(k)=O(k^m)$.
We shall show that $M^c_m(k)=\Omega(k^m)$ by induction on $m$.
The cases when $m=2,3$ can be verified by Corollary~\ref{corol:Mcycle} and Corollary~\ref{corol:Mcycle-m=23}.

Consider $m\geq 4$, and let $m=m_1+m_2$ for some $m_1,m_2\geq 2$.
Assume the assertion holds for all numbers less than $m$, that is, there exist constants $c_1,c_2>0$ such that $M^c_{m_1}(k/2)\geq c_1(k/2)^{m_1}$ and $M^c_{m_2}(k/2)\geq c_2(k/2)^{m_2}$.
By Bertrand-Chebyshev Theorem, there exists a prime $p_1$ with
\begin{align*}
    \frac{c_1(k/2)^{m_1}}{4m_1} < p_1 < \frac{c_1(k/2)^{m_1}}{2m_1}.
\end{align*}
Since $2m_1p_1<M^c_{m_1}(k/2)$, we can pick a cyclic $m_1$-distinguishable sequence on $[k/2]$ of length $M_1=2m_1p_1$.
Similarly, pick a cyclic $m_2$-distinguishable sequence on $[k/2]$ of length $M_2=2m_2p_2$, where $p_2$ is a prime with
\begin{align*}
    \frac{c_2(k/2)^{m_2}}{4m_2} < p_2 < \frac{c_2(k/2)^{m_2}}{2m_2}.
\end{align*}
We may assume $p_1$ and $p_2$ are relatively prime.
To guarantee this, one way is to fine-tune the two constants $c_1,c_2$ so that the two intervals which $p_1$ and $p_2$ belong to are disjoint.
Observe that $d=\gcd(M_1/m_1,M_2/m_2)=2$ and $\gcd(d,M_i/(m_id))=1$ for $i=1,2$.
As $M_1>\frac{c_1}{2}(k/2)^{m_1}$ and $M_2>\frac{c_2}{2}(k/2)^{m_2}$, by Theorem~\ref{thm:cross-product}, the cross product of the above two sequences is a cyclic $m$-distinguishable sequence on $[k]$ of length
\begin{equation}\label{eq:asymptotic-cyclic-proof}
\frac{M_1M_2}{d}\left(\frac{1}{m_1}+\frac{1}{m_2}\right) > \frac{c_1c_2}{2^{m+3}}\left(\frac{1}{m_1}+\frac{1}{m_2}\right)k^m.
\end{equation}
%Notice that $d=\gcd(M_1/m_1,M_2/m_2)$, where $M_1$ and $M_2$ are the lengths of the two corresponding cyclic $m_1$- and $m_2$-distinguishable sequences on $[k/2]$.
%One can fine-tune the two sequences to make $d$ as small as possible, but larger than $1$.
The coefficient of $k^m$ in Eqn.~\eqref{eq:asymptotic-cyclic-proof} is independent of $k$, which leads to $M^c_m(k)=\Omega(k^m)$.
\end{proof}

We end this section with an asymptotic bound of $M_m(k)$, which can be derived immediately by Proposition~\ref{prop:t-cut} and Proposition~\ref{prop:upper}.

\begin{corollary}\label{corol:asymptotic}
    For any $m$ and $k$, by viewing $M_m(k)$ as a function of $k$, one has 
    \begin{align*}
        M_m(k) = \Theta(k^m).
    \end{align*}
\end{corollary}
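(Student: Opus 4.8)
The plan is to obtain matching upper and lower bounds of order $k^m$ and to let the already-established cyclic result, Theorem~\ref{thm:asymptotic-cyclic}, do essentially all of the work. First I would dispose of the upper bound. Proposition~\ref{prop:upper} already supplies $M_m(k) \leq \binom{k+m-1}{m} + m - 1$. For a fixed $m$ the quantity $\binom{k+m-1}{m} = \frac{(k+m-1)(k+m-2)\cdots k}{m!}$ is a polynomial in $k$ of degree exactly $m$ with leading coefficient $1/m!$, so the additive constant $m-1$ is swallowed and the bound reads $M_m(k) = O(k^m)$.

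Next comes the lower bound, which is the only place the deeper result is needed. Theorem~\ref{thm:asymptotic-cyclic} asserts $M^c_m(k) = \theta(k^m)$, and in particular $M^c_m(k) = \Omega(k^m)$. The $t$-cut construction recorded in Proposition~\ref{prop:t-cut} turns any cyclic $m$-distinguishable sequence into a linear $m$-distinguishable one at the cost of appending $m-1$ repeated symbols, so that $M_m(k) \geq M^c_m(k) + m - 1 \geq M^c_m(k)$. Hence $M_m(k) = \Omega(k^m)$, and combining the two estimates yields $M_m(k) = \theta(k^m)$.

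The point worth stressing is that there is essentially no obstacle here. The genuine difficulty---producing sequences that are long in an absolute, order-$k^m$ sense rather than merely matching $\binom{k+m-1}{m}$ for special residue classes of $k$---was already resolved in the cyclic setting by the Synthetic Construction and the inductive cross-product argument of Theorem~\ref{thm:asymptotic-cyclic}. The corollary is purely a transfer statement from the cyclic to the linear regime, relying on two cheap facts: the $t$-cut loses only an additive $O(m)$, and the trivial counting bound of Proposition~\ref{prop:upper} is of the correct polynomial degree. The one bit of bookkeeping I would flag is that the implicit constants in $\theta$, $O$, and $\Omega$ are taken with $m$ fixed and $k\to\infty$, so that the degree-$m$ polynomial $\binom{k+m-1}{m}$ and the additive constants $m-1$ genuinely sit inside $O(k^m)$ and $\Omega(k^m)$; this requires no new idea.
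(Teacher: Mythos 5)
Your proof is correct and follows exactly the paper's intended route: the upper bound $M_m(k)=O(k^m)$ from Proposition~\ref{prop:upper}, and the lower bound $M_m(k)=\Omega(k^m)$ by transferring Theorem~\ref{thm:asymptotic-cyclic} to the non-cyclic setting via the $t$-cut inequality of Proposition~\ref{prop:t-cut}. The paper states this corollary as an immediate consequence of those same two propositions (with the cyclic theorem implicit), so your write-up simply makes the identical argument explicit.
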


\section{An Application of MCGCs} 
\label{section:Application}
%%%%%%%%%%%%%%%%%%%%%%%%%%%%%%%%%%%%%%%%%%%%%%%%

A basic solution for object tracking in a
proximity sensor network is to assign each sensor node a unique identification number (ID). If the sensor detects an object within a predefined region, it reports the
detection to a remote observer using its own dedicated
communication channel. 
The detection regions are assumed to
be mutually exclusive, so that a single sensor is activated
at any given time. The channels are assumed to have identical data rate.  If the object location needs to
be reported at a fixed rate, this imposes 
a natural bound on the length of the ID bit
length, which increases as the size of the sensor network
grows.

On the other hand, if the communication % \textcolor{blue} 
{channels assigned
to the sensors do not} interference among them, one can exploit the parallelism to improve the communication efficiency.
Suppose an object can  simultaneously trigger multiple sensors in its vicinity to transmit a code
symbol, one can then {employ} MCGC to exploit communication
parallelism of the sensors.  

Consider a monitored square area of size $L \times L$.    
An object can randomly appear on it. 
We divide time into discrete time slots of duration $h$. 
At the beginning of each time slot, the object may show up anywhere with $(x,y)$-coordinates, where $x \in [0,L]$ and $y \in [0,L]$. %{\color{blue}\sout{kind of a quantum walk.}}
%\footnote{A model in which the object moves more smoothly can be considered also.} 
We want to determine the position of the object at the beginning of each time slot $t$, for each $t$ with an upside precision $\delta$, i.e., if the system says the object is located at $(i \delta, j \delta)$, then it is located in a block area of $x \in ((i-0.5) \delta, (i+0.5) \delta]$ and $y \in ((j-0.5) \delta, (j+0.5) \delta]$. 
We refer to such a block area as a \textit{basic cell}.

For object tracking and localization, we rely on proximity sensors that can detect and report the presence of an object in its neighborhood.
At each time slot, if the object is located within the predefined neighborhood of a sensor, we assume it can determine which basic cell the object is located at. The technology to achieve such a goal, using sensors that work individually or as a team, is well studied and is not the focus of this paper.  

The primary objective of our sensor network is to track the object and report its occurrence to a remote observer in a timely manner.
To fix idea for subsequent discussion, we introduce the concept of a {\it detection block}.  
If the sensor network employs a block size of $m$, then a sensor deployed at position $(i \delta, j \delta)$ would report the presence of an object that is located at $x \in ((i-m/2) \delta, (i+m/2) \delta]$ and $y \in ((j-m/2) \delta, (j+m/2) \delta]$.
To report the discovery, all sensors are equipped with a transmitter that can transmit $B$ bits per $h$ time unit. So, the data rate is $R = B/h$ bits/s. 
 
 \begin{figure}[t]
    \centering
    \includegraphics[width=0.47\columnwidth]{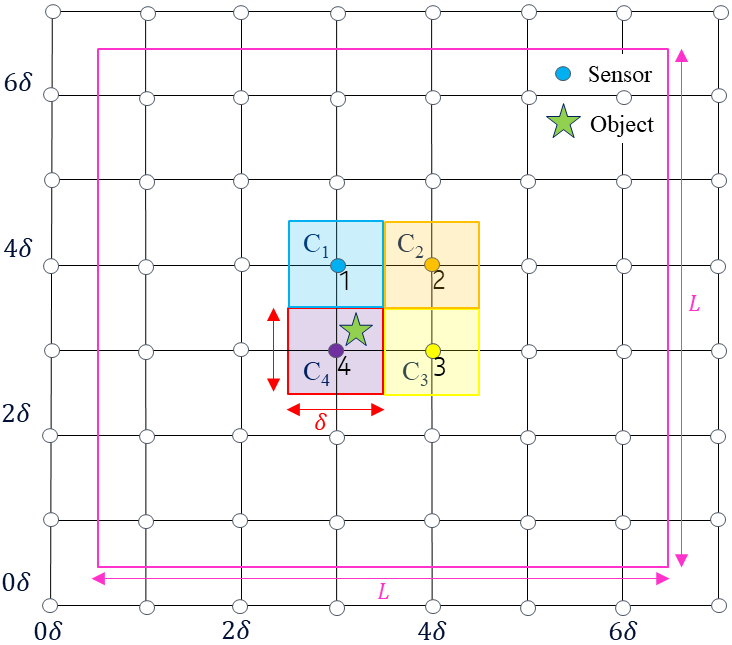}
    \caption{% $N=36$, 
    $L = 6 \delta$. 
    Sensors are deployed in the grid topology to cover the monitored area $L \times L$, indicated by the magenta square. 
    A sensor at $(i \delta, j \delta)$ would detect the presence of object in the vicinity $x \in ((i-0.5) \delta, (i+0.5) \delta]$ and $y \in ((j-0.5) \delta, (j+0.5) \delta]$, and then transmits the pre-assigned unique ID to signal. In the above example, sensor 4 is triggered and sends its ID, denoted by $\text{C}_4$, to inform a remote observer.}  
    \label{fig:Fig1} 
\end{figure} 

We note that we are using an idealized tracking model as we ignore issues such as overlapping detection and noise errors. 
Techniques for handling noisy sensor detection and imperfections, for example by means of Kalman filter or particle filter can be found in \cite{Filter00,Filter02,Sensors17} and related references. 

%%%%%%%%%%%%%%%%%%%%%%%%%%%%%%%%%%%%%%%%%%%%%%%%
\subsection{Baseline Reference}
% \subsection{Straightforward Protocol}
\label{section:protocol-A}
%%%%%%%%%%%%%%%%%%%%%%%%%%%%%%%%%%%%%%%%%%%%%%%% 

As a baseline reference solution where no color coding is needed, we consider the special case where block size $m$ is set to be 1.
In other words, 
% $x$ to detect objects in the range $[x - \delta, x)$.
%\footnote{For the simplicity of discussion, we consider the detection range of sensor at $x$ is towards the left hand side interval $[x - \delta, x)$, however a generalization of the above consideration to the case that the detection is towards the right hand side or both sides is also possible and  the result in this paper can also be applied.} 
% Sensors are located at 
%$(x,y)$, where 
%$x \in \{ \delta, 2 \delta, \ldots, \lceil L/\delta \rceil \delta \}$ and $y \in \{ \delta, 2 \delta, \ldots, \lceil L/\delta \rceil \delta \}$, 
% so that the sensor at $(j+1) \delta$ can detect object in the range $[j\delta, (j+1) \delta)$.
%for the % expected 
%monitoring area $L \times L$.  
% In other words, it is to say 
% At time $h$, 
if an object is located at 
$x \in ((i-0.5) \delta, (i+0.5) \delta]$ and $y \in ((j-0.5) \delta, (j+0.5) \delta]$, 
where $i, j = 1, 2,3, \ldots$, 
% \lceil L/\delta \rceil$  
% within $[j \delta, (j+1) \delta)$, 
only the sensor at 
% $j \delta$ 
% $(j+1) \delta$ 
$(i\delta, j\delta)$ 
is triggered and it reports a detection of an object to the remote observer by transmitting its own ID (identification number or label).\footnote{Sensor sends its ID (a number of bits) when the object enters its detection area or range.} 

Given that each sensor has a communication channel with data rate $R$, since there are $\lceil L/\delta \rceil ^2$ sensors, we require 
\begin{equation}\label{eq:protocol-A}
{\log_2 \left(\lceil L/\delta \rceil^2 \right) \leq Rh,}
\end{equation}
%So, if $\delta = 1$, then $N \leq 2^{Rh}$. 
which implies that $L \leq \delta 2^{Rh/2}$. For notational convenience in coming discussions, we % denote 
define {$% N
{K} \triangleq\lceil L/\delta \rceil^2$}.
% \textcolor{blue}{(I replace $N$ by $K$ since  ($M$, $N$) would be much used in Section~\ref{section:2D-Map-Coloring-Problem}.)}

Fig.~\ref{fig:Fig1}  illustrates the system and its setup. As an example, an object appears in the coverage area of sensor 4, which is located at $(3 \delta, 3 \delta)$. Therefore, sensor 4 is triggered, which then reports the detection % and transmit 
by sending its % sensor 
ID % information 
denoted by $\text{C}_4$.
%to indicate. 
%Among sensors 1, 2, 3 and 4, we use different colors on the sensors to % highlight 
%illustrate that their employed IDs $\text{C}_1$, $\text{C}_2$, $\text{C}_3$ and $\text{C}_4$  (being assigned) are distinct in the setup. 
It follows that $% N
{K}$ % unique 
distinct IDs are required and each ID requires $\log_2 % N
{K}$ bits for distinct identification. If the 
expected length $L$ is % larger 
greater than $\delta 2^{Rh/2}$ (i.e., Eqn.~\eqref{eq:protocol-A} does not hold), this simple unique ID protocol is not feasible.
% Moreover
%Besides, if the detection range of each sensor is % larger 
%greater than % the tracking accuracy 
%$\delta$, the above % straightforward 
%%%MCGC based new protocol.

% Figure 1 

\begin{figure}[t]
    \centering
    \includegraphics[width=0.47\columnwidth]{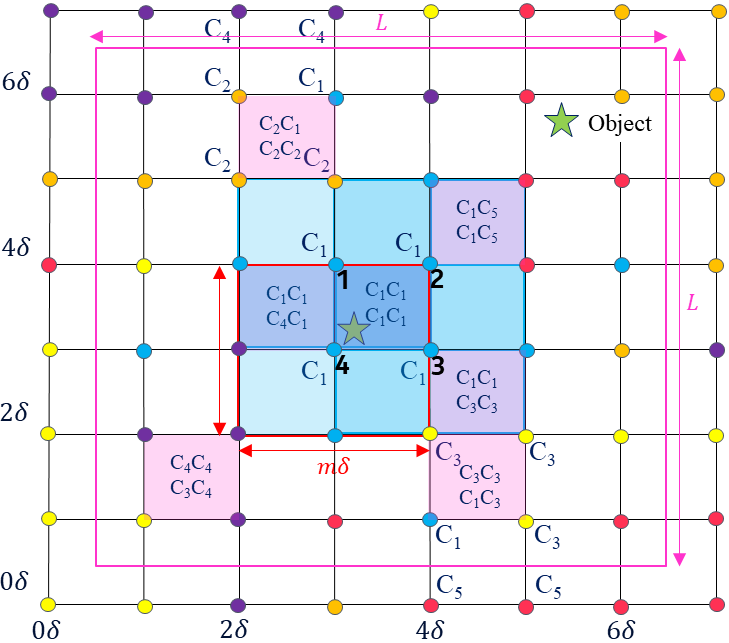}
    \caption{
    $L = 6 \delta$ and $m = 2$. 
    Sensors are deployed for  monitoring the same area with same size $L \times L$, indicated by the magenta square. 
    A sensor at $(i \delta, j \delta)$ would detect the presence of object in the  vicinity $x \in ((i-1) \delta, (i+1) \delta]$ and $y \in ((j-1) \delta, (j+1) \delta]$, and then transmits the pre-assigned ID to signal. 
    Here, sensors 1, 2, 3 and 4 are  triggered as $m = 2$. In contrary to the setup in Fig.~\ref{fig:Fig1}, we do not have to apply distinct ID for each sensor. 
    As shown above, only 5 distinct IDs (indicated as 5 colors) are required for the  sensors since the remote observer can distinguish where the object is located by each set of $m^2$-neighboring sensor IDs. 
    } 
    \label{fig:Fig2b}
\end{figure}

%%%%%%%%%%%%%%%%%%%%%%%%%%%%%%%%%%%%%%%%%%%%%%%%
\subsection{MCGC Protocol (Color Coding Protocol)}
% Color Coding Protocol  
\label{section:protocol-B}
%%%%%%%%%%%%%%%%%%%%%%%%%%%%%%%%%%%%%%%%%%%%%%%%

A natural alternative is to set $m$ to be strictly larger than 1 so that 
%$(x,y)$, where 
%$x \in \{ \delta, 2 \delta, \ldots, \lceil L/\delta \rceil \delta \}$ and $y \in \{ \delta, 2 \delta, \ldots, \lceil L/\delta \rceil \delta \}$ as usual. 
the sensor located at $(i\delta, j\delta)$ 
% monitor and 
detects an object located at $(x,y)$ if
{$x \in ((i-m/2) \delta, (i+m/2) \delta]$ and $y \in ((j-m/2) \delta, (j+m/2) \delta]$.} 
% in % the range in    
%$[x - m \delta/2, x + m \delta/2)$, where $m$ is positive integer, 
% If an object is located at 
%$x \in ((i-m/2) \delta, (i+m/2) \delta]$ and $y \in ((j-m/2) \delta, (j+m/2) \delta]$.
%Note that an object would trigger $m^2$ sensors each time.
An illustration with $m=2$ is provided in % Fig.~\ref{fig:Fig2a}
Fig.~\ref{fig:Fig2b}. 
%To illustrate, let's % consider take $m = 2$. 
% By doing so, 
%If the object is located % within at $x \in (i \delta, (i+1) \delta]$ and $y \in (j \delta, (j+1) \delta]$, 
%the sensors at
% $(j+1) \delta, \ldots, (j+m) \delta$ 
% where $i, j = 1, 2,3, \ldots$, 
% \lceil L/\delta \rceil$  
% within $[j \delta, (j+1) \delta)$, 
% $j \delta$ 
% $(j+1) \delta$ 
%$(i\delta, j\delta)$, 
%$((i+1)\delta, j\delta)$,
%$(i\delta, (j+1)\delta)$ and 
%$((i+1)\delta, (j+1)\delta)$ 
% are triggered and they will report the detection of object to the remote observer and transmit their IDs to signal. 
%are triggered and they will report the detection of object in parallel to the remote observer, see Fig.~\ref{fig:Fig2a}. %  for illustration.  
% While the triggered sensors will transmit their IDs to inform the remote observer, 
%The number of % unique 
%distinct IDs needed for the system indeed\footnote{We will discuss the technical details in the coming sections and then derive the factor of reduction in Section~\ref{section:discussion}.} 
%can be reduced, while we 
% maintain 
%provide the same level of object 

To report the location of an object for a given localization accuracy, it is not necessary to require each sensor reports a unique ID.
Instead we can use a set of $k$ colors for the 
$K$ sensors, each color only requires $\log_2 k$ bits to represent.  The color mapping must satisfy the condition that the combination of each set of 
$m^2$
``\textit{neighboring}'' labels would be distinguishable 
when being collected by the remote observer.  In other words, we require a MCGC with $k$ symbols and code size {equal to 
% or larger than
$K$}.

The channel data rate condition now becomes: 
\begin{equation}\label{eq:protocol-B}
\log_2 k % /\delta 
\leq Rh.
\end{equation}

Given $k$ symbols, we are interested in finding MCGCs with the largest code space. 
Using the example in  Fig.~\ref{fig:Fig2b}, %  with $m = 2$, 
when the object appears in the area  
$(3 \delta, 4 \delta]$ $\times$ $(3 \delta, 4 \delta]$, 
the sensors 1, 2, 3 and 4, which are located at
% $(3\delta, 3\delta)$,  $(4\delta, 3\delta)$, $(3\delta, 4\delta)$ and  $(4\delta, 4\delta)$ 
$(3\delta, 4\delta)$,  $(4\delta, 4\delta)$, $(4\delta, 3\delta)$ and  $(3\delta, 3\delta)$ respectively, 
are triggered as $m = 2$ and they will report the % detection of object in parallel 
discovery by sending their pre-assigned colors  
% denoted by $\text{C}_1$, $\text{C}_2$, $\text{C}_3$ and $\text{C}_4$ 
to the remote observer. 
In contrast to the baseline protocol in Fig.~\ref{fig:Fig1}, 
the solution in
Fig.~\ref{fig:Fig2b} requires a symbol size of $k = 5$ (instead of % a symbol size of $N=36$ 
36)  
since the remote observer can distinguish where the object is located by each combination of 
$m^2$-neighboring IDs 
in the grid. 
For example, $\{\text{C}_1, \text{C}_1, \text{C}_1, \text{C}_1\}$, $\{\text{C}_1, \text{C}_1, \text{C}_1, \text{C}_4\}$, $\{\text{C}_1, \text{C}_1, \text{C}_3, \text{C}_3\}$, $\{\text{C}_3, \text{C}_3, \text{C}_3, \text{C}_1\}$ and $\{\text{C}_4, \text{C}_4, \text{C}_4, \text{C}_3\}$ are distinguishable. 

Note in the setup of our proximity sensor network, 
the remote observer can decode the received signal to read the collected IDs (each set  has $m^2$ elements) but cannot distinguish 
their permutations\footnote{% A multiset in mathematics is a generalization of a set and is a collection of unordered elements. For example, 
For example, % $\{BY\}$ 
$\{\text{C}_1, \text{C}_1, \text{C}_1, \text{C}_4\}$, 
$\{\text{C}_1, \text{C}_1, \text{C}_4, \text{C}_1\}$, 
$\{\text{C}_1, \text{C}_4, \text{C}_1, \text{C}_1\}$ and 
$\{\text{C}_4, \text{C}_1, \text{C}_1, \text{C}_1\}$
would be considered as equivalent as they are the same multiset, since  they have the same number of $\text{C}_1$'s and the same number of $\text{C}_4$'s.},  
i.e., the ordering of the received signals (IDs) cannot be % sorted out or is not deterministic
determined. 

% , otherwise one may require a more sophisticated system with multiple time or frequency sub-channels and also a very carefully managed channel assignment and mapping scheme among the sensors, which is either complicated or costly. 

%Also, the geographical pattern of the elements (i.e., the arrived IDs) cannot be read out since there is a lack of the signal AoA information.  
%%but not the exact pattern of their   letters (we cannot determine their ordering and geographical pattern).  

It should be noted that a 
% study of code or 
sequence construction scheme for a string scanning and recognition system % dot scan 
% under a sliding window measurement (which is similar to our case) was % 
was proposed in \cite{Sinden85}.   
% , which considered a  
% string scanning system and 
It has a sliding window measurement of the scanned string that is similar to our consideration.  However, it is considered that % imposes 
the geographical order of the received information in the process could be known. 
% In our system, we are not aware of the order and pattern, 
% of the IDs, i.e., activated tag IDs {1, 2, 5, 1} is considered the same as {1, 1,2,5}. Thus, [FWS85] 
That is not our case, thus the above prior art cannot be applied to solve our problem.

% Figure 2 

Let's denote the number of multiset combinations of the IDs of $m^2$-neighboring % consecutive 
sensors by $N^{m}(k)$. In order to locate the object, we need: 
\begin{equation}\label{eq:unique-codes-number}
% N 
{K} = \lceil L/\delta \rceil^2 \leq N^{m}(k).
\end{equation}
%For example, 
%If $\delta = 1$, we need to	find $m$, $k$ and the coding scheme such that $N \leq N^{m}(k)$. It is clear that the gain is from $\ln_2 N$ to $\ln_2 k$. 
Once $L,\delta$ and $m$ are given, it suffices to find $k$ such that $% N 
{K} \leq N^{m}(k)$. 
It is clear that % the gain or 
the 
% factor of 
reduction on the number of required distinct IDs (or colors) is from 
% $N$ 
{$K$} to $k$.   
An illustration with $m = 3$ is provided in Fig.~\ref{fig:Fig3a}.  
% Notice that 
A sensor at   $(i\delta, j\delta)$ 
can detect objects in    
$x \in ((i-m/2) \delta, (i+m/2) \delta]$ and $y \in ((j-m/2) \delta, (j+m/2) \delta]$. 
When the object appears in the area  
$(2.5 \delta, 3.5 \delta]$ $\times$ $(2.5 \delta, 3.5 \delta]$, 
the sensors $1, 2, \ldots, 9$
% , which are located at $(3\delta, 4\delta)$,  $(4\delta, 4\delta)$, $(4\delta, 3\delta)$ and  $(3\delta, 3\delta)$ respectively, 
are thus triggered as $m = 3$.  
Similarly, 
% to that under $m = 2$, 
we do not require distinct ID for each sensor. 
We just need a symbol size of $k = 3$, which is even smaller than that for % $m = 1$ and also 
$m = 2$.  
% for the same monitored area (magenta square). 
% Instead of using 36 or 49 sensors as spent in Fig.~\ref{fig:Fig1} or Fig.~\ref{fig:Fig2b}, respectively, here we spend 64 sensors for doing so, which is given by $(L+(m-1))^2$. % In comparison, $(L+(m-1))^2 / L^2$. 
% For large $L$ or $m \ll L$, $(L+(m-1))^2 / L^2 \approx 1$. It is to say the increase on the number of sensors due to $m$ is small or marginal. On the other hand, 
When $m$ increases, generally we can reduce $k$. We will discuss the factor of reduction in the following section.  

% \end{remark}

\begin{figure}[t]
    \centering
    \includegraphics[width=0.47\columnwidth]{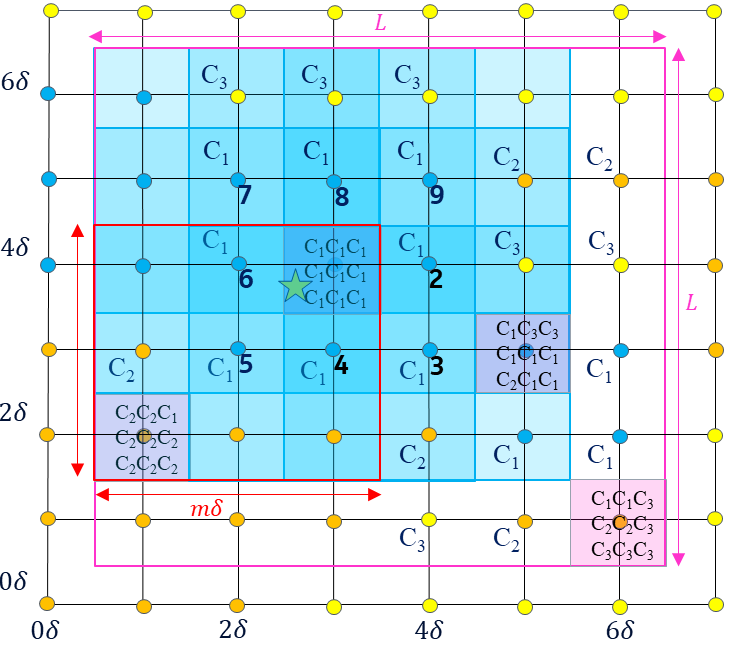} 
    \caption{$L = 6 \delta$ and $m = 3$. 
    Sensors are deployed for monitoring the same area.
    A sensor at $(i \delta, j \delta)$ would detect the presence of object in the  vicinity $x \in ((i-1.5) \delta, (i+1.5) \delta]$ and $y \in ((j-1.5) \delta, (j+1.5) \delta]$, and then transmits the pre-assigned ID to signal. 
    Here, sensors $1, 2, \ldots, 9$ are triggered as $m = 3 $.
    As shown above, only 3 distinct IDs (indicated as 3 colors) are required for the sensors.}   
    \label{fig:Fig3a} 
\end{figure}

%%%%%%%%%%%%%%%%%%%%%%%%%%%%%%%%%%%%%%%%%%%%%%%%
\section{Decoding for Position Identification} 
\label{section:decoding}
%%%%%%%%%%%%%%%%%%%%%%%%%%%%%%%%%%%%%%%%%%%%%%%%
In this section, we address the decoding problem associated with position identification based on the proposed MCGC scheme. 
For simplicity, we consider cyclic lattices.

%Each location is associated with a designated codeword. 
%Given a received codeword, the task of the decoder is to correctly infer the originating location. 
%Our goal is to formalize this framework and analyze the conditions under which reliable position recovery is possible.

Let $f$ be a color multiset code defined by a color mapping $\Phi\in\mathcal{C}_{\M;k}$.
By definition, if $f$ is $\m$-distinguishable, then $f$ is an one-to-one function and the inverse $f^{-1}$ defined on the image of $f$ exists.
This section focuses on characterizing the inverse function.

%%%%%%%%%%%%%%%%%%%%%%%%%%%%%%%%%%%%%%%%%%%%%%%%
\subsection{The 1D Codes}

We first consider the case $m=2$ and $k$ is odd.
Assume that the number of colors is $k=2h+1$.
Following the discussion in Section~\ref{section:lower-bounds}, the $2$-distinguishable color multiset code $f$ is based on an Eulerian circuit of $G_{2h+1}$, the complete graph with a self-loop at each vertex.
The inverse $f^{-1}$ can potentially be characterized, provided that the underlying Eulerian circuit satisfies certain desirable properties.
The following provides a construction, including three steps, of an Eulerian circuit which has a nice property.
In this paper, we refer the Eulerian circuit obtained based on this construction to the \textit{canonical} Eulerian circuit.
\begin{enumerate}[(i)]
\item Consider $2h$ vertices labeled $1,2,\ldots,2h$, arranged cyclically on a circle in clockwise order.
Let $P_1$ be the path $1,2h,2,2h-1,3,2h-2,\ldots,h,h+1$, and for each $i$, $1\leq i\leq h$, define $P_i$ to be the path obtained from $P_1$ by a clockwise rotation by $i$ positions (equivalently, by applying the cyclic shift $(1,2,\ldots,2h)^i$ to its vertices).
\item Extend each path $P_i$ to $P'_i$ by attaching one additional edge to each endpoint, duplicating the endpoint labels, and then attaching the ending vertex to an extra vertex labeled $2h+1$.
\item Extend $P'_h$ by repeating the last vertex, $2h+1$, once. Finally, concatenate the resulting $h$ paths to yield a cycle that forms an Eulerian circuit.
\end{enumerate}

%To facilitate the subsequent discussion, we denote the paths obtained in (ii) by $P'_1,\ldots,P'_h$. 
%In (iii), the first path is further extended and we denote it by $\hat{P}'_1$.
The length of each $P'_i$, $1\leq i\leq h-1$ is $2h+3$ and the length of $P'_h$ is $2h+4$.
Note that this construction basically follows the classic construction that the complete graph $K_{2h+1}$ can be decomposed into $n$ edge-disjoint Hamiltonian cycles. 
See~\cite[p.\,16]{Bollobas98}, for instance.

\begin{example}\label{ex:decode-2} \rm
Consider $h=4$.
The four paths $P_1,P_2,P_3$ and $P_4$ are shown in Fig.~\ref{fig:decoding}.
The second step yields $P'_1=1,1,8,2,7,3,6,4,5,5,9$, $P'_2=2,2,1,3,8,4,7,5,6,6,9$, and $P'_3=3,3,2,4,1,5,8,6,7,7,9$, while the last step further produces $P'_4=4,4,3,5,2,6,1,7,8,8,9,9$. %$\hat{P}'_4=1,1,8,8,2,2,7,7,\allowbreak 3,3,6,6,4,4,5,5,9,9$. 
The resulting canonical Eulerian circuit is
\begin{equation}\label{eq:decoding-9}
\begin{split}
& 1,1,8,2,7,3,6,4,5,5,9, \ 2,2,1,3,8,4,7,5,6,6,9, \\
& 3,3,2,4,1,5,8,6,7,7,9, \ 4,4,3,5,2,6,1,7,8,8,9,9.
\end{split}
\end{equation}
In Eqn.~\eqref{eq:decoding-9}, the blank spaces are inserted to distinguish between different paths.
\begin{figure}%[h]
\centering
\includegraphics[width=0.9\columnwidth]{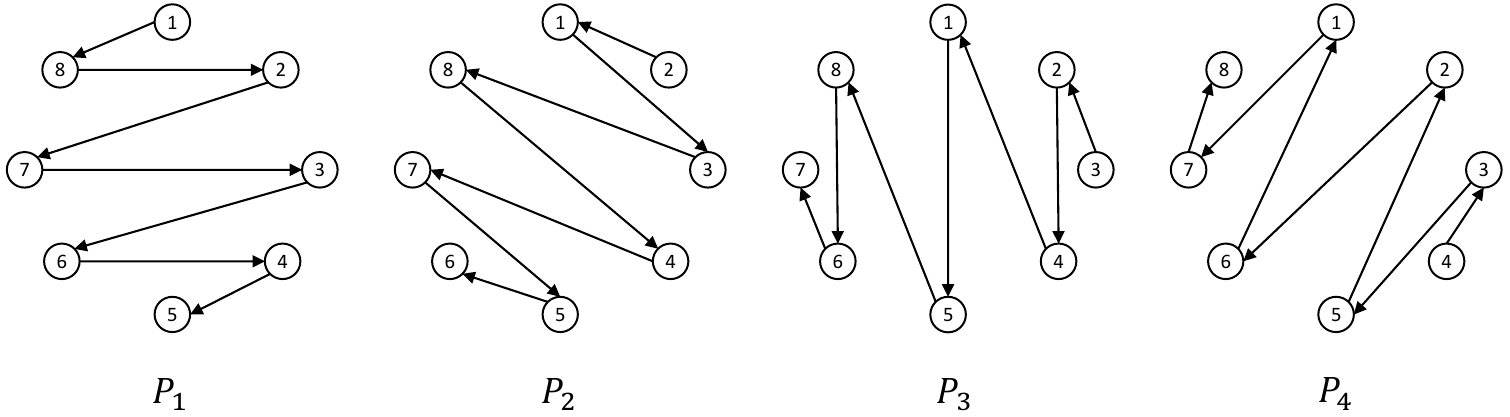}
    \caption{Four paths in the complete graph $K_8$.}  \label{fig:decoding} 
\end{figure} 
\end{example}

Among the paths $P_1,\ldots,P_h$ in the first step of the construction of the canonical Eulerian circuit, only $P_1$ may contain edges whose endpoint labels sum to $2h+1$ or $2h+2$; moreover, every edge in $P_1$ must have its endpoints summing to one of these two values.
This property allows us to determine which edge is under consideration by simply checking the sum of its endpoints.
More precisely, consider an edge with endpoints $a$ and $b$.
%If $a=b$, then the edge appears  belongs to $\hat{P}'_1$.
%If one of the endpoints is $2h+1$, then the edge is at the end of some path $P_i$.
If $a=b$, of if one of $a$ or $b$ equals $2h+1$, then this edge appears on both sides of some path.
Otherwise, we consider the sum $a+b$.
If $a+b=2h+1$ or $2h+2$, then the edge belongs to $P'_1$, and we can further identify which edge it is by the smallest value of $\{a,b\}$ and the sum $a+b$.
Whenever $a+b$ is different from both $2h+1$ and $2h+2$, the edge must belong to $P_r$ for some $2\leq r\leq h$. 
This edge can be mapped to an edge $e$ in $P_1$ by a counterclockwise rotation of the $2h$ points, $1,2,\ldots, 2h$, by $(r-1)$ positions.
By construction, $e$ has endpoints sum $2h+1$ or $2h+2$.

Now, let $\Phi\in\mathcal{C}_{M,k}$ be the color mapping with the underlying graph is the canonical Eulerian circuit over $G_k$, where $k=2h+1$ and $M=(h+1)(2h+1)$.
Note that the number of edges of $G_k$ is $(h+1)(2h+1)$.
With the construction of the canonical Eulerian circuit, we have $\Phi(0)=1$, $\Phi(2h+1)=h+1$,
\begin{equation}\label{eq:decoding-canonical-1}
     \Phi(2i-1)=i \text{ and } \Phi(2i)=2h+1-i, \text{ for }1\leq i\leq h,
\end{equation}
due to the labels of the vertices in $P'_1$;
$\Phi(h(2h+3))=2h+1$ and
\begin{equation}\label{eq:decoding-canonical-2}
    \Phi(r(2h+3)-1)=2h+1 \quad \text{for } 1\leq r\leq h
\end{equation}
due to the labels of the ending vertices in each $P'_r$; 
and, for $0\leq a\leq 2h+1$,
\begin{equation}\label{eq:decoding-canonical-3}
    \Phi(r(2h+3)+a) = \Phi(a) + 1 \quad (\text{mod}\ 2h), \quad \text{for }1\leq r\leq h-1,
\end{equation}
due to the labels of the vertices in $P'_r$.
%\begin{equation}\label{eq:decoding-canonical-1}
%    \begin{cases}
%        \Phi(4t)=\Phi(4t+1)=t+1, \\
%        \Phi(4t+2)=\Phi(4t+3)=2h-t,
%    \end{cases}
%    \quad \text{for }0\leq t\leq h-1,
%\end{equation}
%\begin{equation}\label{eq:decoding-canonical-2}
%    \Phi(i(2h+1)-1) = 2h+1, \quad \text{for } 1\leq 2\leq h+1,
%\end{equation}
%and, for $0\leq a\leq 2h-1$,
%\begin{equation}\label{eq:decoding-canonical-3}
%    \Phi(i(2h+1)+a) = \Phi(\lfloor a/2 \rfloor) + i-1 \ (\text{mod }2h), \quad \text{for }2\leq i\leq h.
%\end{equation}
%Note that Eqns.~\eqref{eq:decoding-canonical-1}, \eqref{eq:decoding-canonical-2}, and \eqref{eq:decoding-canonical-3} are used to characterize the image of $\Phi$ associated with points in $\hat{P}'_1$, all end vertices in $P'_i$, and all internal vertices (including the starting vertices) in $P'_i$, respectively.
Note that the modulo operation in Eqn.~\eqref{eq:decoding-canonical-3} is taken over the set $\{1,2,\ldots,2h\}$.

Let $f$ be the color multiset code defined by $\Phi$.
The code $f$ is $2$-distinguishable, and it is essentially a function from $\mathcal{G}^c_M$ to $\mathcal{P}(2,2h+1)$, the collection of $2$-multi-subsets of $[2h+1]$.
The inverse $f^{-1}$ can be obtained from Eqns.~\eqref{eq:decoding-canonical-1} -- \eqref{eq:decoding-canonical-3}, as follows.
Let $\{a,b\}\in\mathcal{P}(2,2h+1)$ be a given multiset.
If $a=b$, then 
\begin{equation}\label{eq:decoding-1}
    f^{-1}(\{a,a\}) = \begin{cases}
        (a-1)(2h+3), & \text{if } 1\leq a\leq h, \\
        (a-h-1)(2h+3)+2h, & \text{if } h< a\leq 2h, \\
        (h+1)(2h+1)-2, & \text{if } a=2h+1.
    \end{cases}
\end{equation}
If exactly one of $a,b$ equals to $2h+1$, let $c=\min\{a,b\}$ and then
\begin{equation}\label{eq:decoding-2}
     f^{-1}(\{a,b\}) = \begin{cases}
         (h+1)(2h+1)-1, & \text{if } c=1, \\
         (c-1)(2h+3)-1, & \text{if } 1 < c\leq h, \\
         (c-h-1)(2h+3)+2h+1, & \text{if } h < c \leq 2h.
     \end{cases}
\end{equation}
Finally, if $a\neq b$ and non of $a,b$ is $2h+1$, we find the value $r$, $0\leq r\leq h-1$, such that 
\begin{equation}\label{eq:decoding-finding-r}
    a-r\ (\text{mod }2h) + b-r\ (\text{mod }2h) = 2h+1 \text{ or } 2h+2.
\end{equation}
Note that modulo operation in Eqn.~\eqref{eq:decoding-finding-r} is still taken over $\{1,2,\ldots,2h\}$.
Let $c=\min\{a-r\ (\text{mod }2h), b-r\ (\text{mod }2h)\}$ and $\textsf{sum}=a-r\ (\text{mod }2h) + b-r\ (\text{mod }2h)$.
Then,
\begin{equation}\label{eq:decoding-3}
     f^{-1}(\{a,b\}) = \begin{cases}
         r(2h+3)+2c-1, & \text{if } \textsf{sum}=2h+1, \\
         r(2h+3)+2c-2, & \text{if } \textsf{sum}=2h+2.
     \end{cases}
\end{equation}
The decoding procedure can be performed according to the algorithm~\ref{alg:decoding}.
This algorithm requires $O(h)$ time only at line $7$, while all other steps take constant time.
It should be noted that the size of the code is $(h+1)(2h+1)$, a quadratic function of $h$.

\begin{algorithm}[htb]
\caption{Decoding procedure of $f$ based on the canonical Eulerian circuit}
\label{alg:decoding}
\begin{algorithmic}[1] 
\REQUIRE A multiset $\{a,b\}$
\ENSURE A grid point $x=f^{-1}(\{a,b\})$ 
    \IF{$a=b$}
    \STATE $x$ is given in Eqn.~\eqref{eq:decoding-1};
    \ELSE 
        \IF{$a = 2h+1$ \textbf{or} $b = 2h+1$}
        \STATE $x$ is given Eqn.~\eqref{eq:decoding-2};
        \ELSE
            \STATE Find $r$, $0\leq r\leq h-1$, such that $a-r\ (\text{mod }2h) + b-r\ (\text{mod }2h) = 2h+1$ or $2h+2$; 
            \STATE $x$ is given in Eqn.~\eqref{eq:decoding-3};
        \ENDIF
    \ENDIF
\end{algorithmic}
\end{algorithm}

\medskip

\noindent\textbf{Example 6} (conti.)\textbf{.} 
Let $f$ be the color multiset code defined by the color mapping associated with the Eulerian circuit in~\eqref{eq:decoding-9}.
If the received multiset is $\{7,7\}$, by Eqn.~\eqref{eq:decoding-1}, the corresponding grid point is $(7-4-1)\cdot 11 + 8=30$.
If the received multiset is $\{4,9\}$, by Eqn.~\eqref{eq:decoding-2}, we have $c=4$ and the corresponding grid point is $(4-1)\cdot 11 - 1=32$.
If the received multiset is $\{2,4\}$, by Eqn.~\eqref{eq:decoding-finding-r}, we have $r=2$, $\textsf{sum}=10$ and $c=2$.
By applying Eqn.~\eqref{eq:decoding-3}, the corresponding grid point is $2\cdot 11 +2\cdot 2-2 =24$.

\medskip

It is worth emphasizing that even in the case $m=2$, if the given $2$-distinguishable sequence lacks a specific structure, the decoding process is rather difficult and can only be carried out by exhaustively searching through a look-up table.
This difficulty becomes even more pronounced in the more general case $m>2$.

%%%%%%%%%%%%%%%%%%%%%%%%%%%%%%%%%%%%%%%%%%%%%%%%
\subsection{High-dimensional Codes based on Product Multiset Codes}

In the case of an arbitrary $\m$-distinguishable $n$-dimensional color multiset code, decoding is highly nontrivial and difficult.
In the following, we propose a feasible decoding strategy for the color multiset codes produced by product codes.

Following the setting in Section~\ref{section:product-code}, consider the $n$-dimensional color mapping $\Phi$ on $\mathcal{G}_{\M}$ with $\prod_{i=1}^n k_i$ colors given in Eqn.~\eqref{eq:product-code-def}: 
\begin{align*}
    \Phi(x_1,\ldots,x_n)=(\Phi_1(x_1),\ldots,\Phi_n(x_n)),
\end{align*}
which is based on $\Phi_i\in\mathcal{C}_{M_i;k_i}$ for $1\leq i\leq n$.
Let
\begin{align*}
    \theta:[k_1]\times[k_2]\times\cdots\times[k_n] \to [k_1k_2\cdots k_n]
\end{align*}
be a one-to-one correspondence.
Such a one-to-one correspondence can be established using a simple congruence property.
For example, when $n=2$, one may define $\theta((a,b))=k_1(a-1)+b$ so that $a$ and $b$ can be obtained by dividing $\theta((a,b))$ by $k_1$.

Let $f$ be a color multiset code defined by $\Phi$.
Assume that each color multiset code defined by $\Phi_i$, $1\leq i\leq n$, is $m_i$-distinguishable.
Then, by Proposition~\ref{prop:product-code}, it follows that $\Phi$ is $\m$-distinguishable.
Consider a color multiset, $S_{\m}(x_1,\ldots,x_n)$, tagged at $(x_1,\ldots,x_n)$.
We first consider the decoding procedure of the first coordinate, say $x_1$.
Decompose $S_{\m}(x_1,\ldots,x_n)$ into $m_1$ disjoint subsets by, for $j=0,1,\ldots,m_1-1$,
\begin{align*}
    A_j \triangleq \{\Phi(x_1+j,x_2+t_2,\ldots,x_n+t_n):\,0\leq t_i<m_i, 2\leq i\leq n\}.
\end{align*}
Under the one-to-one correspondence $\Phi$ with respect to the first coordinate, all elements in $A_j$ are projected to the same element $\Phi_1(x_1+j)$.
In this way, the multiset $S_{\m}(x_1,\ldots,x_n)$ is projected to the multiset $\{\Phi(x_1),\Phi(x_1+1),\ldots,\Phi(x_1+m_1-1)\}=S_{m_1}(x_1)$, each of which appears exactly $\prod_{i=2}^{n}m_i$ times.
Therefore, $x_1$ can be decoded in the $n$-dimensional case whenever it can be decoded in the color multiset code defined by $\Phi_1$.
This method can be applied to other coordinates, say the decoding of $x_2,\ldots,x_n$.
In other words, the proposed decoding strategy simplifies the high-dimensional case by considering each dimension individually.

%%%%%%%%%%%%%%%%%%%%%%%%%%%%%%%%%%%%%%%%%%%%%%%%
\section{Color Coding Gain} % Performance Gain  
\label{section:discussion}
%%%%%%%%%%%%%%%%%%%%%%%%%%%%%%%%%%%%%%%%%%%%%%%%

By \textit{color coding gain} we refer to the factor of reduction of the number of bits used to label each sensor by the proposed protocol over the baseline reference protocol. 
In a 2D grid $\mathcal{G}_{M_1,M2}$, the number of bits needed to code an ID of a sensor under the baseline reference scheme is $\log_2{M_1M_2}$, while under the MCGC based protocol it is $\log_2 K_{M_1,M_2}(m_1,m_2)$.

In this section, we consider general $n$-dimensional grids.
Let $\M=(M_1,\ldots,M_n)$ and $\m=(m_1,\ldots,m_n)$ for some positive integers $M_1,\ldots,M_n,m_1,\ldots,m_n$ with $m_i<M_i$ for all $i$.
The color coding gain under the MCGC based protocol, denoted by $\mathcal{R}_{\M}(\m)$, is given by
\begin{equation}\label{eq:gain-definition}
    \mathcal{R}_{\M}(\m) \triangleq \frac{\log_2 K_{\M}(\m)}{\log_2 \left(\prod_{i=1}^{n}M_i\right)}.
\end{equation}

For ease of discussion, we use $\nu(\M)$ and $\nu(\m)$ to denote $\prod_{i=1}^n M_i$ and $\prod_{i=1}^n m_i$, respectively.
Let $k=K_{\M}(\m)$.
Consider an $\m$-distinguishable color multiset code defined by $\Phi\in\mathcal{C}_{\M;k}$.
By definition, the multisets $S_{\m}(x_1,\ldots,x_n)$ under $\Phi$ are all distinct for all $0\leq x_i\leq M_i-m_i$, $1\leq i\leq n$.
Similar to the argument in the proof of Proposition~\ref{prop:upper}, we represent each of these multisets as $\{1^{e_1},2^{e_2},\ldots,k^{e_k}\}$, where $e_s$ indicates the multiplicity of the element $s$ and therefore is a non-negative integer.
As an $\m$-block consists of $\nu(\m)$ grid points, we have $e_1+e_2+\cdots+e_k=\nu(\m)$.
By Eqn.~\eqref{eq:nonnegative-H}, the number of all possible color multisets $S_{\m}(x_1,\ldots,x_n)$ is $\binom{k+\nu(\m)-1}{\nu(\m)}$.
It follows that $\prod_{i=1}^{n}(M_i-m_i+1)\leq\binom{k+(\nu(\m))-1}{\nu(\m)}$, and thus
\begin{equation}\label{eq:gain-lower-proof_1}
    \nu(\M)=\prod_{i=1}^{n}M_i \leq \frac{k^{\nu(\m)}}{(\nu(\m))!}.
\end{equation}
Then, we have
\begin{align}
\log_2 (\nu(\M)) &\leq \nu(\m)\log_2{k}-\log_2(\nu(\m)!) \notag \\
&= \nu(\m)\log_2{k} -\big(\nu(\m)\log_2(\nu(\m))-\nu(\m)\log_2{e} +O(\log_2(\nu(\m))) \big), \label{eq:Stirling}
\end{align}
which implies that
\begin{equation}\label{eq:Stirling2}
\log_2{k} \geq \frac{\log_2(\nu(M))}{\nu(\m)} + \log_2(\nu(\m)) - \log_2{e} + \frac{c\log_2(\nu(\m))}{\nu(\m)},
\end{equation}
for some constant $c$.
The equality in Eqn.~\eqref{eq:Stirling} is due to \textit{Stirling's approximation formula}~\cite{Dutka91}.
Therefore, as $M_i$ goes to infinity, for each $i$, the color coding gain defined in Eqn.~\eqref{eq:gain-definition} has a natural lower bound given by
\begin{equation}\label{eq:gain-lower}
\mathcal{R}_{\M}(\m) \geq \frac{1}{\nu(\m)}.
\end{equation}

Now, we shall derive the color coding gain by means of the product multiset code.
By Proposition~\ref{prop:minimal}, $K_{\M}(\m) \leq \prod_{i=1}^{n}K_{M_i}(m_i)$, which implies by Eqn.~\eqref{eq:gain-definition} that 
\begin{equation}\label{eq:gain-PMC}
\mathcal{R}_{\M}(\m) \leq \frac{\sum_{i=1}^{n}\log_2 K_{M_i}(m_i)}{\sum_{i=1}^{n}\log_2{M_i}}.
\end{equation}

For given $M_i$ and $m_i$, let $k_i=K_{M_i}(m_i)$. 
By Corollary~\ref{corol:asymptotic}, we have $M_i\approx k_i^{m_i}$ as $M_i\to\infty$.
It follows from Eqn.~\eqref{eq:gain-PMC} that
\begin{equation}\label{eq:gain-PMC2}
\mathcal{R}_{\M}(\m) \leq \frac{\sum_{i=1}^{n}\log_2{k_i}}{\sum_{i=1}^{n}m_i\log_2{k_i}}.
\end{equation}

In particular, when $m=m_1=\cdots=m_n$, i.e., the case of $n$-dimensional hypercube detection blocks, we conclude from Eqn.~\eqref{eq:gain-lower} and Eqn.~\eqref{eq:gain-PMC2} that
\begin{equation}\label{eq:gain-square}
\frac{1}{m^n} \leq \mathcal{R}_{\M}(m,\ldots,m) \leq \frac{1}{m}.
\end{equation}

Note that we can configure the system parameter $m$ for the reduction factor at the cost of larger detection range of a sensor, as a kind of implementation tradeoff.   
For $m > 1$, one can also consider as a sensing collaboration or cooperative localization, which employs a larger detection range but could offer the reduction factor to the number of transmitted bits per node. 

In what follows, we will provide some experimental results.
We first consider the 1D cases.
Table~\ref{tab:K_M(m)-234} lists the minimum number of colors needed for $\mathcal{G}_M$, $M=10,100,1000,10000$, with square block size $m=2,3,4$, which can be obtained directly by Table~\ref{tab:M_m(k)-bounds} and Corollary~\ref{corol:Mcycle-m=23}.

\begin{table}[ht]
\centering
\begin{tabular}{|c|rrrr|}
\hline
\diagbox{$m$}{$M$} & $10$ & $100$ & $1000$ & $10000$  \\ \hline
2 & 5 & 15 & 45 & 141 \\ 
3 & 4 & 8 & 18 & 39 \\
4 & 3 & 6 & 11 & 21 \\ \hline 
\end{tabular}
\medskip
\caption{$K_M(m)$ for $M=10,100,1000,10000$ and $m=2,3,4$.}
\label{tab:K_M(m)-234}
\end{table}

Then, we can get the upper bounds of the color coding gains, given in Eqn.~\eqref{eq:gain-PMC}, for the 2D grid $\mathcal{G}_{M_1,M_2}$ cases with sizes $M_1,M_2\in\{10,100, 1000, 10000\}$ and the detection blocks $m_1\times m_2$ with $2\leq m_1,m_2\leq 4$.
For example, $K_{10000,100}(4,3)\leq K_{10000}(4)\times K_{100}(3)=21\times 8$, so $\mathcal{R}_{10000,100}(4,3)$ is upper-bounded by
\begin{align*}
    \frac{\log_2{21}+\log_2{8}}{\log_{2}{10000}+\log_2{100}}\approx 0.371.
\end{align*}
See Table~\ref{tab:K_MN(43)} for the color coding gain for each case with $m_1=4, m_2=3$.

\begin{table}[ht]
\centering
\begin{tabular}{|r|rrrr|}
\hline
\diagbox{$M_1$}{$M_2$} & $10$ & $100$ & $1000$ & $10000$  \\ \hline
10 & 0.539 & 0.460 & 0.433 & 0.413 \\ 
100 & 0.460  & 0.420 & 0.406 & 0.395 \\
1000 & 0.410 & 0.389 & 0.383 & 0.376 \\ 
10000 & 0.385 & 0.371 & 0.368 & 0.364 \\ \hline
\end{tabular}
\medskip
\caption{The color coding gains $\mathcal{R}_{M_1,M_2}(4,3)$ based on the product multiset code for $M_1,M_2\in\{10,100,1000,10000\}$.}
\label{tab:K_MN(43)}
\end{table}

Table~\ref{tab:R_M_N(m,m)-234} lists the color coding gain based on the product multiset code for each case with $2\leq m=m_1=m_2\leq 4$.
One can see that the values will slowly converge to $1/m$ as $M_1$ or $M_2$ goes to infinity.

\begin{table}[ht]
\centering
\begin{tabular}{|c|r|rrrr|}
\hline
$m$ & \diagbox{$M_1$}{$M_2$} & $10$ & $100$ & $1000$ & $10000$  \\ \hline
\multirow{4}{*}{$2$} & 10 & 0.698 & 0.625 & 0.588 & 0.569 \\ 
& 100 & 0.625  & 0.588 & 0.566 & 0.554 \\
& 1000 & 0.588 & 0.566 & 0.551 & 0.554 \\ 
& 10000 & 0.569 & 0.554 & 0.543 & 0.537 \\ \hline
\multirow{4}{*}{$3$} & 10 & 0.602  & 0.502 & 0.464 & 0.438 \\ 
& 100 & 0.502  & 0.451 & 0.432 & 0.415 \\
& 1000 & 0.464 & 0.432 & 0.418 & 0.406 \\ 
& 10000 & 0.438 & 0.415 & 0.406 & 0.397 \\ \hline
\multirow{4}{*}{$4$} & 10 & 0.477 & 0.418 & 0.379 & 0.360 \\ 
& 100 & 0.418  & 0.389 & 0.364 & 0.350 \\
& 1000 & 0.379 & 0.364 & 0.347 & 0.337 \\ 
& 10000 & 0.360 & 0.350 & 0.337 & 0.330 \\ \hline
\end{tabular}
\medskip
\caption{The color coding gains $\mathcal{R}_{M_1,M_2}(m,m)$ based on the product multiset code for $M_1,M_2\in\{10,100,1000,10000\}$ and $m=2,3,4$.}
\label{tab:R_M_N(m,m)-234}
\end{table}

%%%%%%%%%%%%%%%%%%%%%%%%%%%%%%%%%%%%%%%%%%%%%%%%
\section{Conclusion} \label{section:conclusion}
%%%%%%%%%%%%%%%%%%%%%%%%%%%%%%%%%%%%%%%%%%%%%%%%

We propose the concept of coding source data by mapping to alphabet multisets.  
A solution approach to this problem when the source data can be organized into an $n$-dimensional integer lattice or grid is introduced.
The solution, color multiset coding, is defined via a mapping from the grid to the alphabet, in which the elements are referred to as colors.  
The mapping construction for higher dimensional grids can be further decomposed to construction built on 1D grids using the idea of product multiset code.  
1D color codes that require the minimal number of colors is examined in details.
An example application of multiset coding to an object tracking problem on a proximity sensor network is presented to conclude the paper.

In the future, an important research direction is to explore more efficient color mappings for higher dimensional grids.
In particular, an interesting idea is to generalize the synthetic construction proposed in this paper.
%In a subsequent paper, we will examine how to obtain more efficient color mappings for higher dimensional grids.
%The synthetic construction for grids will be investigated in further details. 
Another direction is to consider real-world factors and challenges such as interference, measurement noise errors and engineering practice imperfections.
It is also worth noting that the proposed scheme can be generalized to multi-target tracking problem, which is also left for future study.

\bibliographystyle{IEEEtran}
% \balance

\bibliography{IEEEabrv,reference,reference_vlp}
\end{document}